\tikzset{every picture/.style={
		scale=0.15,
		font={\fontsize{8pt}{12}\selectfont},
		baseline={(current bounding box.west)}
}} 
\newcommand{\abs}[3]{\lambda {#1}^{#2}. #3}
\newcommand{\app}[2]{#1 \, #2}	      
\newcommand{\FV}{\mathit{FV}}
\newcommand{\plug}[2]{#1 \langle #2 \rangle}
\newcommand{\emptyCtxt}{\langle \cdot \rangle}
\newcommand{\id}[1]{\abs{#1}{}{#1}}
\newcommand{\appGen}[2]{#1 \mathbin{\$} #2}
\newcommand{\appLazy}[2]{#1 \mathbin{@} #2}
\newcommand{\appLR}[2]{#1 \mathbin{\overrightarrow{@}} #2}
\newcommand{\appRL}[2]{#1 \mathbin{\overleftarrow{@}} #2}
\newcommand{\W}[1]{\llparenthesis #1 \rrparenthesis}
\renewcommand{\oc}{\mathop{!}}
\newcommand{\twoheadmapsto}{
\mathrel{\ooalign{$\twoheadrightarrow$\cr%
\kern-.15ex\raisebox{.2ex}{\scalebox{1}[0.8]{$\shortmid$}}\cr}}}
\newcommand{\longtwoheadmapsto}{
\mathrel{\ooalign{$\longtwoheadrightarrow$\cr%
\kern-.15ex\raisebox{.2ex}{\scalebox{1}[0.8]{$\shortmid$}}\cr}}}
\newcommand{\up}{\mathord{\uparrow}}
\newcommand{\dn}{\mathord{\downarrow}}
\newcommand{\lb}[1]{\mathit{#1}}
\newcommand{\Init}{\mathit{Init}}
\newcommand{\Final}{\mathit{Final}}
\newcommand{\ev}{\mathit{Eval}}
\newcommand{\ex}{\mathit{Exec}}
\keywords{Geometry of Interaction,cost analysis,evaluation strategies}
\begin{document}

\title[The Dynamic Geometry of Interaction Machine]{
The Dynamic Geometry of Interaction Machine: \\
A Token-Guided Graph Rewriter{\rsuper*}}
\titlecomment{
{\lsuper*}This paper expands the work presented at WPTE
2017~\cite{MuroyaG17CBV}, which extends the initial presentation at
CSL 2017~\cite{MuroyaG17}.
}

\author[K.~Muroya]{Koko Muroya}	
\address{School of Computer Science, University of Birmingham, UK}	
\email{\{k.muroya,d.r.ghica\}@cs.bham.ac.uk}  

\author[D.\,R.~Ghica]{Dan R. Ghica}	





\begin{abstract}
 In implementing evaluation strategies of the lambda-calculus, both correctness and 
 efficiency of implementation are valid concerns.
 While the notion of correctness is determined by the evaluation
 strategy, regarding efficiency there is a larger design space that can be explored, in
 particular the trade-off between space  versus time efficiency.
 Aiming at a unified framework that would enable the study of this
 trade-off, we introduce
 an abstract machine, inspired by 
 Girard's Geometry of Interaction (GoI), a machine combining token
 passing and graph rewriting. 
 We show soundness and completeness of our abstract machine, called
 the \emph{Dynamic GoI Machine} (DGoIM), with
 respect to three evaluations: call-by-need, left-to-right
 call-by-value, and right-to-left call-by-value.
 Analysing time cost of its execution classifies the machine as
 ``efficient'' in Accattoli's taxonomy of abstract machines.
\end{abstract}


\maketitle


\section{Introduction}

The lambda-calculus is a simple yet rich model of computation, relying on 
a single mechanism to activate a function in
computation, beta-reduction, that replaces function parameters with
actual input.
While in the lambda-calculus itself 
beta-reduction can be applied in an unrestricted way, it is evaluation strategies that determine
the way beta-reduction is applied when the lambda-calculus is used as a programming language.
Evaluation strategies often imply how intermediate results are copied, discarded,
cached or reused.
For example, everything is repeatedly evaluated as many times as
requested in the call-by-name strategy.
In the call-by-need strategy, once a function requests its input, the
input is evaluated and the result is cached for later use.
The call-by-value strategy evaluates function input and caches the
result even if the function does not require the input.

The implementation of any evaluation strategy must be
correct, first of all, i.e.\ it has to produce results as stipulated by the
strategy.
Once correctness is assured, the next concern is efficiency.
One may prefer better space efficiency, or better time
efficiency, and it is well known that one can be traded off for the other.
For example, time efficiency can be improved by caching more
intermediate results, which increases space cost.
Conversely, bounding space requires repeating computations, which adds
to the time cost.
Whereas  correctness is well defined for any evaluation strategy, there
is a certain freedom in managing efficiency.
The challenge here is how to produce a unified framework which is flexible 
enough to analyse and guide the choices required by this trade-off.
Recent studies by Accattoli et al.\
\cite{AccattoliDL16,AccattoliBM14,Accattoli16} clearly establish classes of efficiency
for abstract machines that implement a given evaluation strategy.
They characterise efficiency by means of the
number of beta-reduction applications required by the strategy, and
introduce two efficiency classes, namely ``efficient''
and ``reasonable''.
This classification of abstract machines
gives us a starting point to quantitatively
analyse the trade-offs required in an implementation.


\subsection{Token-Passing GoI}

We employ Girard's Geometry of Interaction (GoI)
\cite{Girard89GoI1}, a semantics of linear logic proofs, 
as a framework for  studying the trade-off
between time and space efficiency.
In particular we focus on the token-passing style of GoI, which gives
abstract machines for the lambda-calculus, pioneered by Danos and
Regnier~\cite{DanosR96} and Mackie~\cite{Mackie95}.
These machines evaluate a term of the lambda-calculus by 
translating the term to a graph, a network of simple transducers, which 
executes by passing a data-carrying token around.

Token-passing  GoI  decomposes higher-order
computation into local token actions, or low-level interactions of
simple components.
It can give strikingly innovative implementation techniques for
functional programs, such as Mackie's \emph{Geometry of
Implementation} compiler~\cite{Mackie95}, Ghica's \emph{Geometry of
Synthesis} (GoS) high-level synthesis tool~\cite{Ghica07GoS1}, and
Sch{\"{o}}pp's resource-aware program transformation to a low-level
language~\cite{Schoepp14a}.
The interaction-based approach is also convenient for the complexity
analysis of programs, e.g.\
Dal Lago and Sch{\"{o}}pp's \textsc{IntML} type system of
logarithmic-space evaluation~\cite{DalLagoS16}, and Dal Lago et al.'s
linear dependent type system of polynomial-time
evaluation~\cite{DalLagoG11,DalLagoP12}.

Fixed-space execution is essential for GoS, since in the case of digital circuits the memory footprint of the program must be known at compile-time, and fixed. Using a restricted version of the call-by-name language Idealised Algol~\cite{GhicaS11GoS3} not only the graph, but also the token itself can be given a fixed size. Surprisingly, this technique also allows the compilation of recursive programs~\cite{GhicaSS11GoS4}. The GoS compiler shows both the usefulness of the GoI as a guideline for unconventional compilation and the natural affinity between its space-efficient abstract machine and call-by-name evaluation. The practical considerations match the prior theoretical understanding of this connection~\cite{DanosR96}.

The token passed around a graph simulates graph rewriting
without actually rewriting, which is in fact an extremal instance of
the trade-off we mentioned above.
Token-passing GoI keeps the underlying graph fixed
and uses the data stored in the token to route it.
It therefore favours space efficiency at the cost of time efficiency.
The same computation is repeated when, instead, intermediate
results could have been cached by saving copies of certain sub-graphs
representing values.

\subsection{Interleaving Token Passing with Graph Rewriting}

Our intention is to lift the token-passing GoI to a framework to
analyse the trade-off of efficiency, by strategically interleaving it
with graph rewriting. 
We present the framework as an abstract machine that interleaves token
passing with graph rewriting.
The machine, called the \textit{Dynamic GoI Machine} (DGoIM), is
defined as a state transition system with transitions for token
passing as well as transitions for graph rewriting. 
The key idea is that the token holds control over graph rewriting, by
visiting redexes and triggering the rewrite transitions.

Graph rewriting offers fine control over caching and sharing
intermediate results.
Through graph rewriting, the DGoIM can reduce sub-graphs visited by the
token, avoiding repeated token actions and improving time efficiency.
However, fetching cached results can increase the size of the graph.
In short, introduction of graph rewriting sacrifices space while
favouring time efficiency.
We expect the flexibility given by a fine-grained control over
interleaving will enable a careful balance between space and time
efficiency.

As a first step in our exploration of the flexibility of this machine,
we consider the two extremal cases of interleaving.
The first extremal case is ``passes-only'', in which the DGoIM never
triggers graph rewriting, yielding an ordinary token-passing abstract
machine.
As a typical example,
the $\lambda$-term $\app{(\abs{x}{}{t})}{u}$ is evaluated like
this:

\vspace{1.5ex}
\begin{center}
 \begin{minipage}[c]{.15\linewidth}
  \centering
  \begin{tikzpicture}[baseline=(current bounding box.west)]
   \draw [rounded corners, very thick] (-6.5,0) rectangle (-0.5,3)
   node [midway] {$\abs{x}{}{t}$};
   \draw [rounded corners, very thick] (0.5,0) rectangle (6.5,3)
   node [midway] {$u$};
   \fill (0,-2) circle [radius = 0.4];
   \draw [rounded corners, very thick]
   (0,-2)--(-3.5,-1.5)--(-3.5,0)
   (0,-2)--(3.5,-1.5)--(3.5,0)
   (0,-2)--(0,-4);
  \end{tikzpicture}
 \end{minipage} \hspace{.025\linewidth} %
 \begin{minipage}[c]{.75\linewidth}
  \begin{enumerate}
   \item A token enters the graph on the left at the bottom open edge.
   \item A token visits and goes through the left sub-graph
	 $\abs{x}{}{t}$.
   \item Whenever a token detects an occurrence of the variable
	 $x$ in $t$, it traverses the right sub-graph $u$, then
	 returns carrying information about the resulting value of
	 $u$.
   \item A token finally exits the graph at the bottom open
	 edge.
  \end{enumerate}
 \end{minipage}
\end{center}
\vspace{1.5ex}

\noindent
Step 3 is repeated whenever the argument $u$ needs to be
re-evaluated. This passes-only strategy of interleaving corresponds to
call-by-name evaluation.

The other extreme is ``rewrites-first'', in which
the DGoIM interleaves token passing with as much, and as early, graph
rewriting as possible, guided by the token. This corresponds to both
call-by-value and call-by-need evaluations, with different
trajectories of the token.
In the case of left-to-right call-by-value, the token enters the graph
from the bottom, traverses the left-hand-side sub-graph, which happens
to be already a value, then visits the sub-graph $u$ even before the
bound variable $x$ is used in a call.
The token causes rewrites while traversing the sub-graph $u$, and when
it exits, it leaves behind a graph corresponding to a value $v$ such
that $u$ reduces to $v$.
For right-to-left call-by-value, the token visits the sub-graph $u$
straightaway after entering the whole graph, reduces the sub-graph
$u$, to the graph of the value $v$, and visits the left-hand-side
sub-graph.
The difference with call-by-need is that the token visits and reduces
the sub-graph $u$ only when the variable $x$ is encountered in
$\abs{x}{}{t}$.

In our framework, all these three evaluations involve similar tactics
for caching intermediate results.
Different trajectories of the token realise their only difference,
which is the timing of cache creation.
Cached values are fetched in the same way: namely, if repeated
evaluation is required, then the sub-graph corresponding now to the
value $v$ is copied.
One copy can be further rewritten, if needed, while the original is
kept for later reference.

\subsection{Contributions}

This work presents a token-guided graph-rewriting abstract machine for
call-by-need, left-to-right call-by-value, and right-to-left
call-by-value evaluations.
The abstract machine is given by the rewrites-first strategy of the
DGoIM, which turns out to be as natural as the passes-only strategy for
call-by-name evaluation.
It switches the evaluations, by simply having different nodes that
correspond to the three different evaluations, rather than modifying
the behaviour of a single node to suite different evaluation demands.
This can be seen as a case study illustrating the flexibility of the
DGoIM,
which is achieved through controlled interleaving of rewriting and
token-passing, and through changing graph representations of terms.

We prove the soundness and completeness of the extended machine with
respect to the three evaluations separately, using a ``sub-machine''
semantics, where the word ``sub'' indicates both a focus on
substitution and its status as an intermediate representation.
The sub-machine semantics is based on Sinot's ``token-passing''
semantics~\cite{Sinot05,Sinot06} that makes explicit the two main
tasks of abstract machines: searching redexes and substituting
variables.

The time-cost analysis classifies the machine as ``efficient'' in
Accattoli's taxonomy of abstract machines~\cite{Accattoli16}.
We follow Accattoli et al.'s general methodology for quantitative
analysis of abstract machines~\cite{AccattoliBM14,Accattoli16},
however the method cannot be used ``off the shelf''.
Our machine is a more refined transition system with more transition
steps, and therefore does not satisfy one of their
assumptions~\cite[Sec.~3]{Accattoli16}, which requires one-to-one
correspondence of transition steps.
We overcome this technical difficulty by building a weak simulation of
the sub-machine semantics, which is also used in the proof of
soundness and completeness.
The sub-machine semantics resembles Danvy and Zerny's storeless
abstract machine~\cite{DanvyZ13}, to which the general recipe of cost
analysis does apply.

Finally, an on-line visualiser\footnote{
Link to the on-line visualiser:
\url{https://koko-m.github.io/GoI-Visualiser/}}
is implemented, in which our machine can be executed on arbitrary
closed (untyped) lambda-terms.
The visualiser also supports an existing abstract
machine based on the token-passing GoI, which will be discussed
later, to illustrate various resource usage of abstract machines.

The rest of the paper is organised as follows.
We present the sub-machine semantics in Sec.~\ref{sec:TermCalculus},
and introduce the DGoIM with the rewrites-first strategy in
Sec.~\ref{sec:Trans}.
In Sec.~\ref{sec:impl-eval-strat}, we show how the DGoIM implements
the three evaluation strategies via translation of terms into graphs,
and establish a weak simulation of the sub-machine semantics by the
DGoIM. The simulation result is used to prove soundness and
completeness of the DGoIM, and to analyse its time cost, in
Sec.~\ref{sec:time-cost-analysis}.
We compare our graph-rewriting approach
to improve time efficiency of token-passing GoI, with another approach
from the literature, namely ``jumping'' approach, in
Sec.~\ref{sec:rewr-vs.-jump}.
We discuss related conventional abstract machines
in Sec.~\ref{sec:relat-work-concl}.




\section{A Term Calculus with Sub-Machine Semantics}
\label{sec:TermCalculus}

We use an untyped term calculus that accommodates three
evaluation strategies of the lambda-calculus, by dedicated constructors for function application: namely,
$\appLazy{}{}$ (call-by-need), $\appLR{}{}$ (left-to-right
call-by-value) and $\appRL{}{}$ (right-to-left call-by-value).
The term calculus uses all strategies so that we do not have to present three 
almost identical calculi. Nevertheless, we are not interested in their
interaction but in each strategy separately.
In the rest of the paper, we therefore assume that each term contains
function applications of a single strategy.
As shown in the top of Fig.~\ref{fig:MockMachine}, the calculus
accommodates explicit substitutions $[x \leftarrow u]$.
A term with no explicit substitutions is said to be ``pure''.

The sub-machine semantics is used to establish the soundness of the
graph-rewriting abstract machine.
It imitates an abstract machine, by having the following two features.
Firstly, it extends conventional reduction semantics with reduction
steps that explicitly search for a redex, following the style of
Sinot's ``token-passing semantics''~\cite{Sinot05,Sinot06}.
Secondly, it decomposes the meta-level substitution into on-demand
linear substitution, using explicit substitutions, as the linear
substitution calculus does~\cite{AccattoliK10}.
The sub-machine semantics
also resembles a storeless abstract machine (e.g.~\cite[Fig.~8]{DanvyMMZ12}).
However the semantics is still too ``abstract'' to be considered
an abstract machine, in the sense that it works modulo
alpha-equivalence to avoid variable captures.

Fig.~\ref{fig:MockMachine} defines the sub-machine semantics of our
calculus.
It is given by labelled relations between \emph{enriched} terms
$\plug{E}{\W{t}}$.
In an enriched term $\plug{E}{\W{t}}$, a sub-term $t$ is not plugged directly into the evaluation 
context, but into a ``window'' $\W{\cdot}$ which makes it syntactically 
obvious where the reduction context is situated. 
Forgetting the window turns an enriched term into an ordinary term.
Basic rules $\mapsto$ are labelled with $\beta$, $\sigma$ or
$\epsilon$.
The basic rules~(\ref{eq:2}),~(\ref{eq:5}) and~(\ref{eq:8}), labelled
with $\beta$, apply
beta-reduction and delay substitution of a bound variable.
Substitution is done one by one, and on demand, by the basic
rule~(\ref{eq:10}) with label~$\sigma$.
Each application of the basic rule~(\ref{eq:10}) replaces exactly one
bound
variable with a value, and keeps a copy of the value for later use.
All  other basic rules, with label $\epsilon$, search for a
redex by moving the window without changing the underlying term.
Finally, reduction is defined by congruence of basic rules with
respect to evaluation contexts, and labelled accordingly.
Any basic rules and reductions are indeed between enriched terms,
because the window $\W{\cdot}$ is never duplicated or discarded.
They are also deterministic.
\begin{figure}[t]
 \begin{align*}
  \text{Terms} &&
  t &\,::=\, x \mid \abs{x}{}{t}
  \mid \appLazy{t}{t} \mid \appLR{t}{t} \mid \appRL{t}{t}
  \mid t[x \leftarrow t] \\
  \text{Values} &&
  v &\,::=\, \abs{x}{}{t} \\
  \text{Answer contexts} &&
  A &\,::=\, \emptyCtxt \mid A[x \leftarrow t] \\
  \text{Evaluation contexts} &&
  E &\,::=\, \emptyCtxt
  \mid E[x \leftarrow t] \mid \plug{E}{x}[x \leftarrow E] \\
  && &\quad\enspace
  \mid \appLazy{E}{t}
  \mid \appLR{E}{t} \mid \appLR{\plug{A}{v}}{E}
  \mid \appRL{t}{E} \mid \appRL{E}{\plug{A}{v}}
 \end{align*}
 \begin{align}
  \text{Basic rules }
  \mapsto_\beta,\mapsto_\sigma,\mapsto_\epsilon: &&
  \W{\appLazy{t}{u}} &\mapsto_\epsilon \appLazy{\W{t}}{u}
  \label{eq:1}\\ &&
  \appLazy{\plug{A}{\W{\abs{x}{}{t}}}}{u}
  &\mapsto_\beta \plug{A}{\W{t}[x \leftarrow u]}
  \label{eq:2}\\ &&
  \W{\appLR{t}{u}} &\mapsto_\epsilon \appLR{\W{t}}{u}
  \label{eq:3}\\ &&
  \appLR{\plug{A}{\W{\abs{x}{}{t}}}}{u}
  &\mapsto_\epsilon \appLR{\plug{A}{\abs{x}{}{t}}}{\W{u}}
  \label{eq:4}\\ &&
  \appLR{\plug{A}{\abs{x}{}{t}}}{\plug{A'}{\W{v}}}
  &\mapsto_\beta \plug{A}{\W{t}[x \leftarrow \plug{A'}{v}]}
  \label{eq:5}\\ &&
  \W{\appRL{t}{u}} &\mapsto_\epsilon \appRL{t}{\W{u}}
  \label{eq:6}\\ &&
  \appRL{t}{\plug{A}{\W{v}}}
  &\mapsto_\epsilon \appRL{\W{t}}{\plug{A}{v}}
  \label{eq:7}\\ &&
  \appRL{\plug{A}{\W{\abs{x}{}{t}}}}{\plug{A'}{v}}
  &\mapsto_\beta \plug{A}{\W{t}[x \leftarrow \plug{A'}{v}]}
  \label{eq:8}\\ &&
  \plug{E}{\W{x}}[x \leftarrow \plug{A}{u}]
  &\mapsto_\epsilon \plug{E}{x}[x \leftarrow \plug{A}{\W{u}}]
  \notag \\ &&
  & \text{($u$ is not in the form of $\plug{A'}{t'}$)}
  \label{eq:9}\\ &&
  \plug{E}{x}[x \leftarrow \plug{A}{\W{v}}] &\mapsto_\sigma
  \plug{A}{\plug{E}{\W{v}}[x \leftarrow v]}
  \label{eq:10}\\
  \text{Reductions }
  \multimap_\beta,\multimap_\sigma,\multimap_\epsilon: &&
  \parbox[c]{2cm}{
  $\infer[(\chi \in \{ \beta,\sigma,\epsilon \})]
  {\plug{E}{\tilde{t}} \multimap_\chi \plug{E}{\tilde{u}}}
  {\tilde{t} \mapsto_\chi \tilde{u}}$
  } \notag
 \end{align}
 \caption{''Sub-Machine'' Operational Semantics}
 \label{fig:MockMachine}
\end{figure}

\begin{figure}[ht]
 \newcommand{\rw}{black}
 \newcommand{\ps}{gray}

 Call-by-need evaluation:

 \begin{tabular}{|c|}
  \hline \\ $
  \begin{aligned}
   \textcolor{\rw}{
   \W{\appLazy{(\id{x})}{(\appLazy{(\id{y})}{(\id{z})})}}}
   &\textcolor{\ps}{\multimap_\epsilon
   \appLazy{\W{\id{x}}}{(\appLazy{(\id{y})}{(\id{z})})}} \\
   &\textcolor{\rw}{\multimap_\beta
   \W{x}[x \leftarrow \appLazy{(\id{y})}{(\id{z})}]} \\
   &\textcolor{\ps}{\multimap_\epsilon
   x[x \leftarrow \W{\appLazy{(\id{y})}{(\id{z})}}]} \\
   &\textcolor{\ps}{\multimap_\epsilon
   x[x \leftarrow \appLazy{\W{\id{y}}}{(\id{z})}]} \\
   &\textcolor{\rw}{\multimap_\beta
   x[x \leftarrow \W{y}[y \leftarrow \id{z}]]} \\
   &\textcolor{\ps}{\multimap_\epsilon
   x[x \leftarrow y[y \leftarrow \W{\id{z}}]]} \\
   &\textcolor{\rw}{\multimap_\sigma
   x[x \leftarrow \W{\id{z}}[y \leftarrow \id{z}]]} \\
   &\textcolor{\rw}{\multimap_\sigma
   \W{\id{z}}[x \leftarrow \id{z}][y \leftarrow \id{z}]}
  \end{aligned}
  $ \\ \\ \hline
 \end{tabular}

 Call-by-value evaluations:

 \begin{tabular}{|cc|cc|}
  \hline &&& \\ $
  \begin{aligned}
   &\textcolor{\rw}{
   \W{\appLR{(\id{x})}{(\appLR{(\id{y})}{(\id{z})})}}} \\
   &\textcolor{\ps}{\multimap_\epsilon
   \appLR{\W{\id{x}}}{(\appLR{(\id{y})}{(\id{z})})}} \\
   &\textcolor{\ps}{\multimap_\epsilon
   \appLR{\id{x}}{\W{\appLR{(\id{y})}{(\id{z})}}}} \\
   &\textcolor{\ps}{\multimap_\epsilon
   \appLR{\id{x}}{(\appLR{\W{\id{y}}}{(\id{z})})}} \\
   &\textcolor{\ps}{\multimap_\epsilon
   \appLR{\id{x}}{(\appLR{(\id{y})}{\W{\id{z}}})}} \\
   &\textcolor{\rw}{\multimap_\beta
   \appLR{\id{x}}{(\W{y}[y \leftarrow \id{z}])}} \\
   &\textcolor{\ps}{\multimap_\epsilon
   \appLR{\id{x}}{(y[y \leftarrow \W{\id{z}}])}} \\
   &\textcolor{\rw}{\multimap_\sigma
   \appLR{\id{x}}{(\W{\id{z}}[y \leftarrow \id{z}])}} \\
   &\textcolor{\rw}{\multimap_\beta
   \W{x}[x \leftarrow (\id{z})[y \leftarrow \id{z}]]} \\
   &\textcolor{\ps}{\multimap_\epsilon
   x[x \leftarrow \W{\id{z}}[y \leftarrow \id{z}]]} \\
   &\textcolor{\rw}{\multimap_\sigma
   \W{\id{z}}[x \leftarrow \id{z}][y \leftarrow \id{z}]} \\
  \end{aligned}
  $ &&& $
  \begin{aligned}
   &\textcolor{\rw}{
   \W{\appRL{(\id{x})}{(\appRL{(\id{y})}{(\id{z})})}}} \\
   &\textcolor{\ps}{\multimap_\epsilon
   \appRL{\id{x}}{\W{\appRL{(\id{y})}{(\id{z})}}}} \\
   &\textcolor{\ps}{\multimap_\epsilon
   \appRL{\id{x}}{(\appRL{(\id{y})}{\W{\id{z}}})}} \\
   &\textcolor{\ps}{\multimap_\epsilon
   \appRL{\id{x}}{(\appRL{\W{\id{y}}}{(\id{z})})}} \\
   &\textcolor{\rw}{\multimap_\beta
   \appRL{\id{x}}{(\W{y}[y \leftarrow \id{z}])}} \\
   &\textcolor{\ps}{\multimap_\epsilon
   \appRL{\id{x}}{(y[y \leftarrow \W{\id{z}}])}} \\
   &\textcolor{\rw}{\multimap_\sigma
   \appRL{\id{x}}{(\W{\id{z}}[y \leftarrow \id{z}])}} \\
   &\textcolor{\ps}{\multimap_\epsilon
   \appRL{\W{\id{x}}}{((\id{z})[y \leftarrow \id{z}])}} \\
   &\textcolor{\rw}{\multimap_\beta
   \W{x}[x \leftarrow (\id{z})[y \leftarrow \id{z}]]} \\
   &\textcolor{\ps}{\multimap_\epsilon
   x[x \leftarrow \W{\id{z}}[y \leftarrow \id{z}]]} \\
   &\textcolor{\rw}{\multimap_\sigma
   \W{\id{z}}[x \leftarrow \id{z}][y \leftarrow \id{z}]} \\
  \end{aligned}
  $ \\ &&& \\ \hline
 \end{tabular}
 \caption{Evaluations of $\app{(\id{x})}{(\app{(\id{y})}{(\id{z})})}$}
 \label{fig:ExampleEval}
\end{figure}

An \emph{evaluation} of a pure term $t$ (i.e.\ a term with no explicit
substitution) is a sequence of reductions
starting from $\plug{}{\W{t}}$, which is simply $\W{t}$.
Fig.~\ref{fig:ExampleEval} shows evaluations of a pure
term $\app{(\id{x})}{(\app{(\id{y})}{(\id{z})})}$
in the three evaluation strategies.
Reductions labelled with $\beta$ and $\sigma$, which change an
underlying term, are highlighted in black.
All three evaluations involve two beta-reductions, which apply
$\id{x}$ and $\id{y}$ to an argument.
Application of $\id{x}$ comes first in
the call-by-need evaluation, and delayed application of $\id{y}$
happens inside an explicit substitution.
On the other hand, in two call-by-value evaluations,
application of $\id{y}$ comes first, and no reduction happens inside
an explicit substitution.
The two call-by-value evaluations differ only in the way the window is
moved around function application.

The following lemma enables us to follow the use of sub-terms of the
initial term $t$ during the evaluation.
\begin{lem}
 \label{lem:SubMachineInvariants}
 For any evaluation $\W{t} \multimap^* \plug{E'}{\W{t'}}$ starting
 from a pure closed term $t$, the term $t'$ is a sub-term of $t$.
 Moreover, the evaluation context $E'$ is given by the following
 restricted grammar:
 \begin{align*}
  \overline{A} &\,::=\, \emptyCtxt
  \mid \overline{A}[x \leftarrow \plug{\overline{A}}{u}], \\
  \overline{E} &\,::=\, \emptyCtxt
  \mid \overline{E}[x \leftarrow \plug{\overline{A}}{u}]
  \mid \plug{\overline{E}}{x}[x \leftarrow \overline{E}] \\
  &\quad\enspace
  \mid \appLazy{\overline{E}}{u}
  \mid \appLR{\overline{E}}{u}
  \mid \appLR{\plug{\overline{A}}{v}}{\overline{E}}
  \mid \appRL{u}{\overline{E}}
  \mid \appRL{\overline{E}}{\plug{\overline{A}}{v}}
 \end{align*}
 where $u$ and $v$ are sub-terms of $t$, and $v$ is
 additionally a value.
\end{lem}
\begin{proof}[Proof outline]
 The proof is by induction on the length $k$ of the
 evaluation $\W{t} \multimap^k \plug{E'}{\W{t'}}$.
 In the base case, where $k = 0$, we have
 $\overline{E} = \emptyCtxt$ and $t' = t$.
 The inductive case, where $k > 0$, is proved by inspecting a basic
 rule used in the last reduction of the evaluation.
 In the case of the basic rule~(\ref{eq:9}), the last reduction is in
 the form of
 $\plug{E_0}{\plug{E}{\W{x}}[x \leftarrow \plug{A}{u}]}
 \multimap_\epsilon
 \plug{E_0}{\plug{E}{x}[x \leftarrow \plug{A}{\W{u}}]}$
 where $u$ is not in the form of $\plug{A''}{t''}$.
 By induction hypothesis,
 $\plug{E_0}{\plug{E}{}[x \leftarrow \plug{A}{u}]}$ follows the
 restricted grammar, and in particular, $\plug{A}{u}$ can be
 decomposed into a restricted answer context and a
 sub-term of $t$. Because a sub-term of $t$ is also pure, it follows
 that $A$ itself is a restricted answer context and $u$ is a sub-term
 of $t$.
\end{proof}

\section{The Token-Guided Graph-Rewriting Machine}\label{sec:Trans}


In the initial presentation of this work~\cite{MuroyaG17}, we used
proof nets of the multiplicative and exponential fragment of linear
logic~\cite{Girard87LL} to implement the call-by-need evaluation
strategy.
Aiming additionally at two call-by-value evaluation
strategies, we here use graphs that are closer to syntax trees but
are still augmented with the $\oc$-box structure taken from proof
nets.
Moving towards syntax trees allows us to accommodate two call-by-value
evaluations in a uniform way.
The $\oc$-box structures specify duplicable sub-graphs,
and help time-cost analysis of implementations.


\subsection{Graphs with Interface}

We use directed graphs, whose nodes are classified
into \emph{proper} nodes and \emph{link} nodes.
Link nodes are required to meet the following conditions.
\begin{itemize}
 \item For each edge, at least one of its two endpoints is a link
       node.
 \item Each link node is a source of at most one edge, and a target of
       at most one edge.
\end{itemize}
In particular, a link node is called \emph{input} if it is not a
target of any edge, and \emph{output} if it is not a source of any
edge.
An \emph{interface} of a graph is given by the set of all inputs and
the set of all outputs.
When a graph $G$ has $n$ input link nodes and $m$ output link nodes,
we sometimes write $G(n,m)$ to emphasise its interface.
If a graph has exactly one input, we refer to the input link node as
``root''.

\begin{figure}[t]
 \centering
 \includegraphics[scale=0.7]{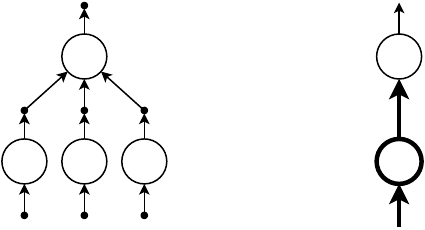}
 \caption{Full (Left) and Simplified (Right) Representation of a
 Graph $G(3,1)$}
 \label{fig:GraphExample}
\end{figure}
An example graph $G(3,1)$ is shown on the left in
Fig.~\ref{fig:GraphExample}. It has four proper nodes depicted by
circles, and seven link nodes depicted by bullets. Its three inputs
are placed at the bottom and one output is at the top.
Shown on the right in Fig.~\ref{fig:GraphExample} is a simplified
version of the representation. We use the following simplification
scheme:
not drawing link nodes explicitly (unless necessary), and
using a bold-stroke arrow (resp.\ circle) to represent a bunch of
parallel edges (resp.\ proper nodes).

The idea of using link nodes, as distinguished from proper nodes, 
comes from a graphical formalisation of string
diagrams~\cite{KissingerPhD}.\footnote{
Our link nodes should not be confused with the terminology ``link'',
which refers to a counterpart of our proper nodes, of proof nets.
}
String diagrams consist of ``boxes'' that are connected to each other
by ``wires'', and may have dangling or looping wires.
In the formalisation, boxes are modelled by ``box-vertices''
(corresponding to proper nodes in our case), and wires are modelled by
consecutive edges connected via ``wire-vertices''
(corresponding to link nodes in our case).
It is link nodes that allow dangling or looping wires to be properly
modelled.
The segmentation of wires into edges can introduce an arbitrary number
of consecutive link nodes, however these consecutive link nodes are
identified by the notion of ``wire homeomorphism''.
We will later discuss these consecutive link nodes, from the
perspective of the graph-rewriting machine.
From now on we simply call a proper node ``node'', and a link node
``link''.

Finally, an operation $\circ_{n,m}$ on graphs, parametrised by natural
numbers $n$ and $m$, is defined as follows:
\begin{equation*}
 G \circ_{n,m} H :=
  \parbox[c]{5cm}{\includegraphics[scale=0.9]{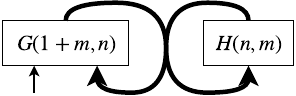}}.
\end{equation*}
In the sequel, we omit the parameters $n,m$ and simply write $\circ$.

\subsection{Node Labels and $\oc$-Boxes}

We use the following set $\mathcal{L}$ to label nodes:
\begin{equation*}
 \mathcal{L} = \{ \lambda, \appLazy{}{}, \appLR{}{}, \appRL{}{},
  \oc, \wn, \lb{D} \}
  \cup \{ \lb{C}_n \mid \text{$n$: a natural number} \}.
\end{equation*}
A node labelled with $X \in \mathcal{L}$ is called an ``$X$-node''.
The first four labels
correspond to the constructors of the calculus presented in
Sec.~\ref{sec:TermCalculus}, namely $\lambda$
(abstraction), $\appLazy{}{}$ (call-by-need application), $\appLR{}{}$
(left-to-right call-by-value application) and $\appRL{}{}$
(right-to-left call-by-value application).
These three application nodes are the novelty of this work.
The token, travelling in a graph, reacts to these nodes in
different ways, and hence implements different evaluation orders.
We believe that this is a more extensible way to accommodate different
evaluation orders, than to let the token react to the same node in
different ways depending on situation.

\begin{wrapfigure}[8]{r}{.5\linewidth}
  \centering
  \vspace{-2ex}
  \includegraphics[scale=0.8]{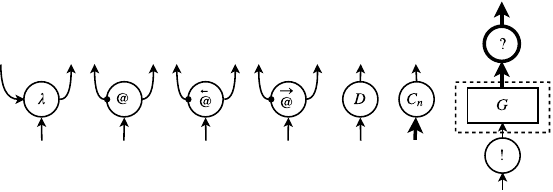}
  \vspace{-2ex}
  \caption{Generators of Graphs}\label{fig:connection}
\end{wrapfigure}
The other labels, namely $\oc$, $\wn$, $\lb{D}$ and $\lb{C}_n$ for
any natural number $n$, are used in the management of copying
sub-graphs. These are inspired by proof nets of the multiplicative and
exponential fragment of linear logic~\cite{Girard87LL}, and
$\lb{C}_n$-nodes generalise the standard binary contraction and
subsume weakening.

We use the \emph{generators} in Fig.~\ref{fig:connection} to build
labelled graphs.
Most generators are given by a graph that consists of one node and a
fixed number of adjacent links.
The number of input/output and incoming/outgoing edges for the node is
determined by the label, as indicated in the figure; in particular, a
label $\lb{C}_n$ indicates $n$ inputs and $n$ incoming edges.
We distinguish two outputs of an application node
($\appLazy{}{}$, $\appLR{}{}$ or $\appRL{}{}$), calling one
``function output'' and the other ``argument output''
(cf.~\cite{AccattoliG09}).
A bullet $\bullet$ in the figure specifies a function output.

The last generator in Fig.~\ref{fig:connection}
turns a graph $G(1,m)$ into a sub-graph (``$\oc$-box''),
by connecting it to one $\oc$-node (``principal
door'') and $m$ $\wn$-nodes (``auxiliary doors'').
This $\oc$-box structure is indicated by a dashed box in the figure.
The $\oc$-box structure, taken from proof nets, assists the
management of duplication of
sub-graphs by specifying those that can be copied.\footnote{
Our formalisation of graphs is based on the view of proof nets as
string diagrams, and hence of $\oc$-boxes as functorial
boxes~\cite{Mellies06}.
}

\subsection{Graph States and Transitions}

We define a graph-rewriting abstract machine as a labelled transition
system between \emph{graph states}.
\begin{defi}[Graph states]
 A \emph{graph state} $((G(1,0),e),\delta)$ is formed of a graph
 $G(1,0)$ with its distinguished link $e$, and token data
 $\delta = (d,f,S,B)$ that consists of:
 \begin{itemize}
  \item a \emph{direction} defined by
	$d ::= \up \mid \dn$,
  \item a \emph{rewrite flag} defined by
	$f ::= \square \mid \lambda \mid \oc$,
  \item a \emph{computation stack} defined by
	$S ::= \square \mid \star:S \mid \lambda:S \mid @:S$, and
  \item a \emph{box stack} defined by
	$B ::= \square \mid \star:B \mid \oc:B \mid
	\diamond:B \mid e':B$,
	where $e'$ is any link of the graph $G$.
 \end{itemize}
\end{defi}
\noindent
The distinguished link $e$  is called the
``position'' of the token.
The token reacts to a node in a graph using its data, which determines its
path.
Given a graph $G$ with root $e_0$,
the \emph{initial} state $\Init(G)$ on it is given by
$((G,e_0),(\up,\square,\square,\star:\square))$, and the \emph{final}
state $\Final(G)$ on it is given by
$((G,e_0),(\dn,\square,\square,\oc:\square))$.
An \emph{execution} on a graph $G$ is a sequence of transitions
starting from the initial state $\Init(G)$.

Each transition
$((G,e),\delta) \to_\chi ((G',e'),\delta')$ between graph states is
labelled by either $\beta$, $\sigma$ or $\epsilon$.
Transitions are deterministic, and classified into \emph{pass}
transitions that search for redexes and trigger rewriting, and 
\emph{rewrite} transitions that actually rewrite a graph as soon as a
redex is found.

A pass transition
$((G \circ H,e),(d,\square,S,B)) \to_\epsilon
 ((G \circ H,e'),(d',f',S',B'))$,
always labelled with $\epsilon$, applies to a state whose rewrite flag
is $\square$.
The graph $H$ contains only one node, and the positions $e$ and $e'$
are an input or an output of the node.
The transition
simply moves the token over the node, and updates its data by
modifying the top elements of stacks, while keeping an underlying
graph unchanged.
When the token passes a $\lambda$-node or a $\oc$-node, a rewrite
flag is changed to $\lambda$ or $\oc$, which triggers a rewrite
transition.
Fig.~\ref{fig:PassTransitions} defines pass transitions, by showing
the single-node graph $H$, token positions and data, omitting
the graph $G$.
The position of the token is drawn as a black triangle, pointing
towards the direction of the token.
The pass transition over a $\lb{C}_n$-node, where $n$ is positive,
pushes the old position $e$ to a box stack.
The link $e$ is drawn as a bullet.
\begin{figure}[t]
 \centering
 \includegraphics[scale=0.9]{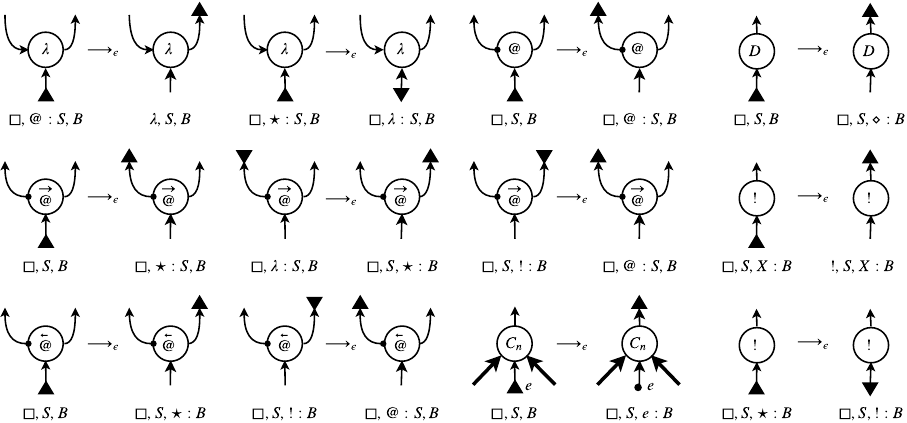}

 \vspace{2ex}
 where $X \neq \star$.
 \caption{Pass Transitions}
 \label{fig:PassTransitions}
\end{figure}

The way the token reacts to application nodes ($\appLazy{}{}$,
$\appLR{}{}$ and $\appRL{}{}$) corresponds to the way the window
$\W{\cdot}$ moves in evaluating these function applications in the
sub-machine semantics (Fig.~\ref{fig:MockMachine}).
When the token moves on to the composition output of an application
node, the top element of a computational stack is either $@$ or
$\star$.
The element $\star$ makes the token return  from a $\lambda$-node,
which corresponds to reducing the function part of application to a
value (i.e.\ abstraction).
The element $@$ lets the token proceed at a $\lambda$-node, raises the
rewrite flag $\lambda$, and hence triggers a rewrite transition that
corresponds to beta-reduction.
The call-by-value application nodes ($\appLR{}{}$ and $\appRL{}{}$)
send the token to their argument output,  pushing the element
$\star$ to a box stack.
This makes the token bounce at a $\oc$-node and return 
to the application node, which corresponds to evaluating the argument
part of function application to a value.
Finally, pass transitions through $\lb{D}$-nodes, $\lb{C}_n$-nodes and
$\oc$-nodes prepare copying of values, and eventually raise the
rewrite flag $\oc$ that triggers on-demand duplication.

A rewrite transition
$((G \circ H,e),(d,f,S,B)) \to_\chi
 ((G \circ H',e'),(d',f',S,B'))$,
labelled with $\chi \in \{ \beta,\sigma,\epsilon \}$, applies to a
state whose rewrite flag is either $\lambda$ or $\oc$.
It replaces the sub-graph $H$ (``redex'') with the graph $H'$ of the
same interface.
The position $e$ that belongs to $H$ is changed to the position $e'$
that belongs to $H'$.
The transition may pop an element from a box stack.
Fig.~\ref{fig:RewriteTransitions} defines rewrite transitions, by
showing the sub-graphs $H$ and $H'$, as well as token positions and
data, omitting the graph $G$.
Before we go through each rewrite transition, we note that rewrite
transitions are not exhaustive in general, as a graph may not match a
redex even though a rewrite flag is raised.
However we will see that there is no failure of transitions in
implementing the term calculus.
\begin{figure}[t]
 \centering
 \includegraphics[scale=0.7]{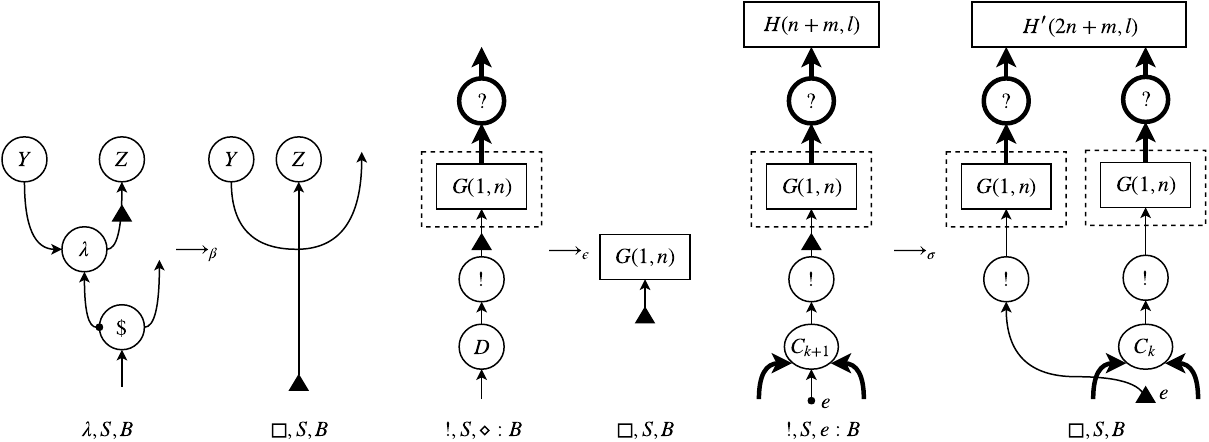}

 \vspace{2ex}
 where $Y \in \mathcal{L}$, $Z \in \mathcal{L}$,
 $\$ \in \{ \appLazy{}{},\appLR{}{},\appRL{}{} \}$, and
 $G(1,n)$ is any graph.
 \caption{Rewrite Transitions}
 \label{fig:RewriteTransitions}
\end{figure}

The first rewrite transition in Fig.~\ref{fig:RewriteTransitions},
with  label $\beta$, occurs when a rewrite flag is $\lambda$.
It implements beta-reduction by eliminating a pair of an abstraction
node ($\lambda$) and an application node
($\$ \in \{ \appLazy{}{},\appLR{}{},\appRL{}{} \}$ in the figure).
Outputs of the $\lambda$-node are required to be connected to
arbitrary nodes (labelled with $Y$ and $Z$ in the figure), so that
edges between links are not introduced.
The other rewrite transitions are for the rewrite flag $\oc$, and
they together realise the copying process of a sub-graph (namely a
$\oc$-box).
The second rewrite transition in Fig.~\ref{fig:RewriteTransitions},
labelled with $\epsilon$, finishes off each copying process by
eliminating all doors of the $\oc$-box $G$.
It replaces the interface of $G$ with output links of the auxiliary
doors and the input link of the $\lb{D}$-node, which is the new
position of the token, and pops the top element
$\diamond$ of a box stack.
Again, no edge between links are introduced.

The last rewrite transition in the figure, with label $\sigma$,
actually copies a $\oc$-box.
It requires the top element $e$ of the old box stack to be one of
input links of the $\lb{C}_{k+1}$-node (where $k$ is a natural
number).
The link $e$ is popped from the box stack and becomes the new position
of the token, and the $\lb{C}_{k+1}$-node becomes a
$\lb{C}_k$-node by keeping all the inputs except for the link $e$.
The sub-graph $H(n+m,l)$ must consist of $l$ parallel $\lb{C}$-nodes
that altogether have $n+m$ inputs.
Among these inputs, $n$ must be connected to auxiliary doors of the
$\oc$-box $G(1,n)$, and $m$ must be connected to nodes that are not in
the redex.
The sub-graph $H(n+m,l)$ is turned into $H'(2n+m,l)$ by introducing $n$
inputs to these $\lb{C}$-nodes as follows: if an auxiliary
door of the $\oc$-box $G$ is connected to a $\lb{C}$-node in $H$, two
copies of the auxiliary door are both connected to the corresponding
$\lb{C}$-node in $H'$.
Therefore the two sub-graphs consist of the same number $l$ of
$\lb{C}$-nodes, whose indegrees are possibly increased.
The $m$ inputs, connected to nodes outside a redex, are kept
unchanged.
Fig.~\ref{fig:CopyExample} shows an example where copying of the graph
$G(1,3)$ turns the graph $H(5,2)$ into $H'(8,2)$.
\begin{figure}[t]
 \centering
 \includegraphics[scale=0.9]{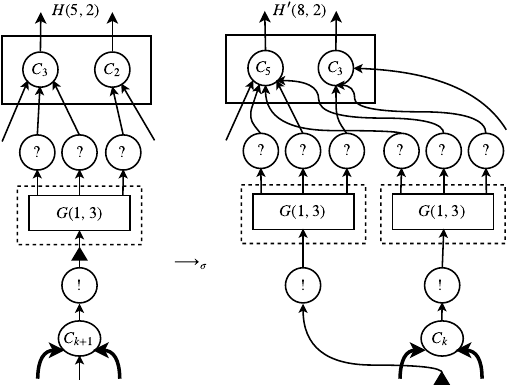}
 \caption{Example of Rewrite Transition $\to_\sigma$}
 \label{fig:CopyExample}
\end{figure}

All pass and rewrite transitions are well-defined, and indeed
deterministic. Pass transitions are also reversible, in the sense
that no two different pass transitions result in the same graph state.
No transition is possible at a final state, and no pass transition
results in an initial state.
An execution of pass transitions only has some continuity in the
following sense.
\begin{lem}[Pass continuity]
 \label{lem:Continuity}
 For any execution $\Init(G) \to^* ((G,e),\delta)$ of pass
 transitions only, there exists a non-empty sequence
 $e_1,\ldots,e_n$ of links of $G$ that satisfies the following.
 \begin{itemize}
  \item $e_1$ is the root of $G$, and $e_n = e$.
  \item For each $i \in \{ 1,\ldots,n-1 \}$, there exists a node whose
	inputs include $e_i$ and whose outputs include $e_{i+1}$.
  \item Each link in the sequence appears as a token position in the
	execution $\Init(G) \to^* ((G,e),\delta)$.
 \end{itemize}
\end{lem}
\begin{proof}[Proof outline]
 The proof is by induction on the length $k$ of the execution
 $\Init(G) \to^* ((G,e),\delta)$.
 In the base case, where $k = 0$, the link $e$ is the root of $G$, and
 $e$ itself as a sequence satisfies the conditions.
 The inductive case, where $k > 0$, is proved by inspecting all
 possibilities of the last pass transition in the sequence.
\end{proof}

The following
``sub-graph'' property is essential in time-cost analysis, because it
bounds the size of duplicable sub-graphs (i.e.\ $\oc$-boxes) in an
execution.
\begin{lem}[Sub-graph property]
 \label{lem:SubGraph}
 For any execution $\Init(G) \to^* ((H,e),\delta)$, each
 $\oc$-box of the graph $H$ appears as a sub-graph of the initial
 graph $G$.
\end{lem}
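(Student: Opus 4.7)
The plan is to proceed by induction on the length of the execution $\Init(G) \to^* ((H,e),\delta)$. The base case is immediate since $H = G$ and every $\oc$-box of $G$ trivially appears as a sub-graph of itself. For the inductive step, assume the property holds for an intermediate state $((H,e),\delta)$ and consider a single further transition yielding $((H',e'),\delta')$; it suffices to show that every $\oc$-box of $H'$ is (isomorphic to) a sub-graph of some $\oc$-box of $H$, since the inductive hypothesis then chains these up to sub-graphs of the initial $G$.

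I would then do a case analysis on the transition rule. Pass transitions leave the graph unchanged, so $H' = H$ and there is nothing to show. For the beta rewrite, a $\lambda$-node and an adjacent application node are removed while the rest of $H$ is untouched; no new $\oc$-nodes or $\wn$-nodes appear, so every $\oc$-box of $H'$ is already a $\oc$-box of $H$. For the door-elimination rewrite (labelled $\epsilon$), the doors of one $\oc$-box of $H$ are deleted: any $\oc$-box of $H'$ is either untouched from $H$, or was a $\oc$-box nested strictly inside the eliminated box, and in either case it is already a $\oc$-box of $H$.

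The main case, and the one requiring care, is the copying rewrite $\to_\sigma$, since it is the only transition that can increase the number of $\oc$-boxes. Here a $\oc$-box $G_0(1,n)$ of $H$ is duplicated: the redex is replaced by two syntactic copies of $G_0$ together with the modified contraction layer $H'(2n+m,l)$. New $\oc$-boxes in $H'$ are exactly the two copies of $G_0$, together with copies of any $\oc$-boxes nested inside $G_0$. Since each such new $\oc$-box is constructed, by the definition of the rewrite, as an isomorphic copy of a $\oc$-box that already occurred in $H$, the inductive hypothesis finishes the argument.

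The potential obstacle is making the word ``appears'' in the statement precise, i.e.\ formulating the property as ``every $\oc$-box of $H$ is isomorphic, as a labelled graph with $\oc$-box structure, to some $\oc$-box of $G$'' and checking that this invariant is preserved through nested copying and through rewrites that delete structure around a box. Once the invariant is phrased in this isomorphism-up-to-sub-graph form, each case reduces to inspecting the redex/contractum pairs in Fig.~\ref{fig:RewriteTransitions} and observing that no rewrite ever creates a $\oc$-box whose interior was not already present in the preceding graph.
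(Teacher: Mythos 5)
Your proposal is correct and is essentially the paper's own argument, spelled out in full: the paper's proof is the one-line observation that rewrite transitions can only copy or discard a $\oc$-box and can never introduce, expand or reduce one, which is exactly the content of your induction on the execution length and case analysis over the transitions of Fig.~\ref{fig:RewriteTransitions}. The care you take in reading ``appears as a sub-graph'' up to isomorphism of labelled graphs with box structure is a reasonable precisification of what the paper leaves implicit.
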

\begin{proof}
 Rewrite transitions can only copy or discard a $\oc$-box, and cannot
 introduce, expand or reduce a single $\oc$-box. Therefore, any
 $\oc$-box of $H$ has to be already a $\oc$-box of the initial graph
 $G$.
\end{proof}


When a graph has an edge between links, the token is just passed along.
With this pass transition over a link at hand, the equivalence relation between
graphs that identifies consecutive links with a single
link---so-called ``wire homeomorphism''~\cite{KissingerPhD}---lifts to
a weak bisimulation between graph states.
Therefore, behaviourally, we can safely ignore consecutive links.
From the perspective of time-cost analysis, we benefit from the fact
that rewrite transitions are designed not to introduce any edge between
links.
This means, by assuming that an execution starts with a graph with no
consecutive links, we can analyse time cost of the execution without
caring the extra pass transition over a link.

\section{Implementation of Evaluation Strategies}
\label{sec:impl-eval-strat}

The implementation of the term calculus, by means of the dynamic GoI,
starts with translating (enriched) terms into graphs.
The definition of the translation uses multisets of variables, to
track how many times each variable occurs in a term. We assume that terms
are alpha-converted in a form in which all binders introduce distinct
variables.
\begin{nota}[Multiset]
 We write $x \in^k M$ if the multiplicity of $x$ in a multiset $M$ is
 $k$.
 The empty multiset is denoted by $\emptyset$.
 The sum of two  multisets $M_1$ and $M_2$, denoted by $M_1 + M_2$, is
 defined as follows: $x \in^k {M_1 + M_2}$ if there exist $k_1$ and
 $k_2$ such that $x \in^{k_1} M_1$, $x \in^{k_2} M_2$ and
 $k = k_1 + k_2$.
 Removing \emph{all} $x$ from a multiset $M$ yields the multiset
 $M \backslash x$, e.g.\ ${[x,x,y] \backslash x} = [y]$.
 We abuse the notation and refer to a multiset $[x,\ldots,x]$ of a
 finite number of $x$'s, simply as $x$.
\end{nota}
\begin{defi}[Free variables]
 The map $\FV$ of terms to multisets of variables is inductively
 defined as below, where
 $\appGen{}{} \in \{ \appLazy{}{},\appLR{}{},\appRL{}{} \}$:
 \begin{align*}
  \FV(x) &:= [x], &
  \FV(\abs{x}{}{t}) &:= \FV(t) \backslash x, \\ 
  \FV(\appGen{t}{u}) &:= \FV(t) + \FV(u), &
  \FV(t[x \leftarrow u]) &:= (\FV(t) \backslash x) + \FV(u).
 \end{align*}
For a multiset $M$ of variables,
the map $\FV_M$ of
 evaluation contexts to multisets of variables is
 defined by:
 \begin{align*}
  \FV_M(\emptyCtxt) &:= M, \\
  \FV_M(\appLazy{E}{t}) &:= \FV_M(E) + \FV(t), \\
  \FV_M(\appLR{E}{t}) &:= \FV_M(E) + \FV(t), \\
  \FV_M(\appLR{\plug{A}{v}}{E}) &:= \FV(\plug{A}{v}) + \FV_M(E), \\
  \FV_M(\appRL{t}{E}) &:= \FV(t) + \FV_M(E), \\
  \FV_M(\appRL{E}{\plug{A}{v}}) &:= \FV_M(E) + \FV(\plug{A}{v}), \\
  \FV_M(E[x \leftarrow t]) &:=
  (\FV_M(E) \backslash x) + \FV(t), \\
  \FV_M(\plug{E'}{x}[x \leftarrow E]) &:=
  (\FV(\plug{E'}{x}) \backslash x) + \FV_M(E).
 \end{align*}
\end{defi}

A term $t$ is said be \emph{closed} if $\FV(t) = \emptyset$.
Consequences of the above definition are the following equations.
\begin{align*}
 \FV(\plug{E}{t}) &= \FV_{\FV(t)}(E), \\
 \FV_M(\plug{E}{E'}) &= \FV_{\FV_M(E')}(E), \\
 \FV_{M+M'}(E) &= \FV_M(E) + M'
 \quad \text{(if $M'$ is not captured in $E$)}, \\
 \FV_x(E) \backslash x &= \FV_\emptyset(E) \backslash x.
\end{align*}

We give translations of terms, answer contexts, and evaluation
contexts separately.
Fig.~\ref{fig:TermAnsCtxtTranslation} and
Fig.~\ref{fig:EvalCtxtTranslation} define
two mutually recursive translations $(\cdot)^\dag$ and $(\cdot)^\ddag$, the 
first one for terms and answer contexts, and the second one for
evaluation contexts.
In the figures,
${\appGen{}{}} \in \{ \appLazy{}{},\appLR{}{},\appRL{}{} \}$,
and $m$ is the multiplicity of $x$.
Fig.~\ref{fig:TranslationGeneral} shows the general form of the
translations, and
Fig.~\ref{fig:TranslationExample} shows translation of a term
$\appLazy{(\appLazy{
(\abs{f}{}{\abs{x}{}{\appLazy{f}{(\appLazy{f}{x})}}})
}{(\id{y})})}{(\id{z})}$.

\begin{figure}[t]
 \begin{minipage}[c]{.45\linewidth}
  \centering
  \includegraphics[scale=0.8]{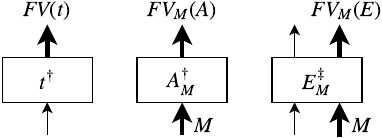}
  \caption{General Form of Translations}
  \label{fig:TranslationGeneral}
 \end{minipage} \hspace{.025\linewidth} %
 \begin{minipage}[c]{.5\linewidth}
  \centering
  \includegraphics[scale=0.6]{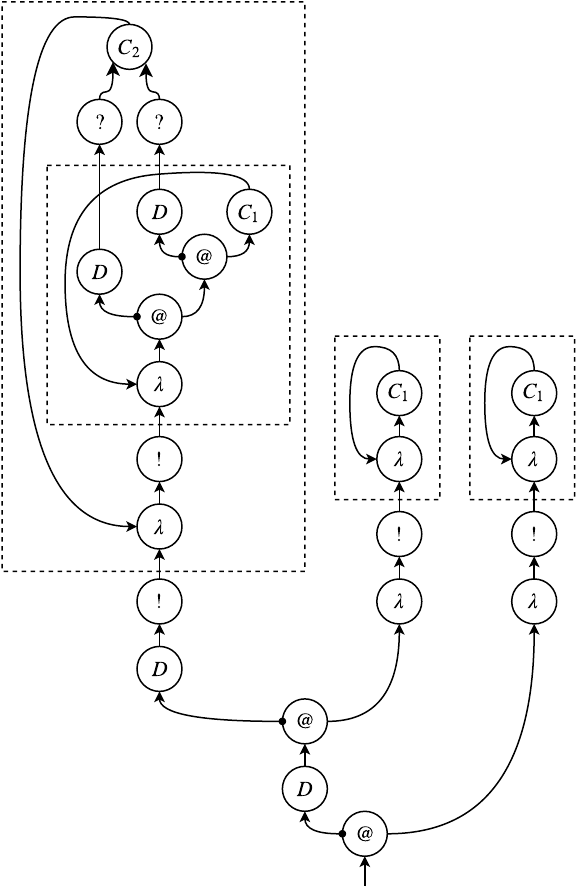}
  \caption[caption]{Translation of a Term \\
  {$\appLazy{(\appLazy{
  (\abs{f}{}{\abs{x}{}{\appLazy{f}{(\appLazy{f}{x})}}})
  }{(\id{y})})}{(\id{z})}$}}
  \label{fig:TranslationExample}
 \end{minipage}
\end{figure}

The DGoIM can evaluate a closed term $t$ by starting an execution on
the translation $t^\dag$. Executions on any translated closed pure
terms can be seen in our on-line
visualiser\footnote{\url{https://koko-m.github.io/GoI-Visualiser/}}.
The translations of answer contexts and evaluation contexts are to
define a weak simulation of the sub-machine semantics by the DGoIM,
which is then used to prove soundness, completeness and efficiency of
the DGoIM.

The annotation of bold-stroke edges means each edge of a bunch is
labelled with an element of the annotating multiset, in a one-to-one
manner.
In particular if a bold-stroke edge is annotated by a variable $x$,
all edges in the bunch are annotated by the variable $x$.
Translation $E^\ddag_M$ of an evaluation context has one input and one
output that are not annotated, which we refer to as the ``main'' input
and the ``main'' output. 
These annotations are only used to define the translations, and
are subsequently ignored during execution.

\begin{figure}[p]
 \begin{minipage}[c]{\linewidth}
  \centering
  \includegraphics[scale=0.9]{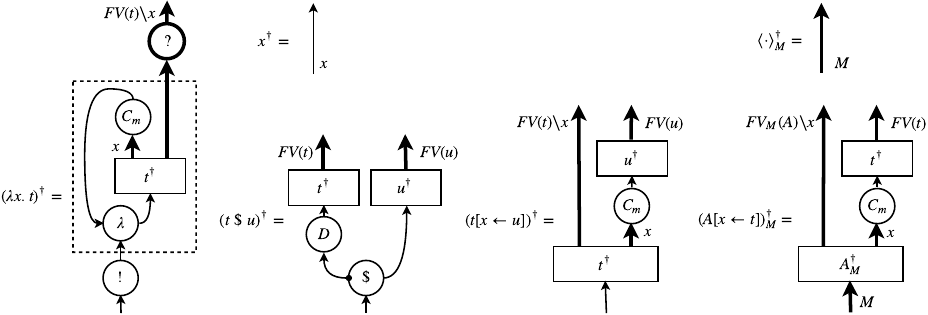}

    where
  $\$ \in \{ \appLazy{}{},\appLR{}{},\appRL{}{} \}$.
  \caption{Inductive Translation of Terms and Answer Contexts}
  \label{fig:TermAnsCtxtTranslation}
 \end{minipage}
 \begin{minipage}[c]{\linewidth}
  \centering
  \vspace{3ex}
  \includegraphics[scale=0.9]{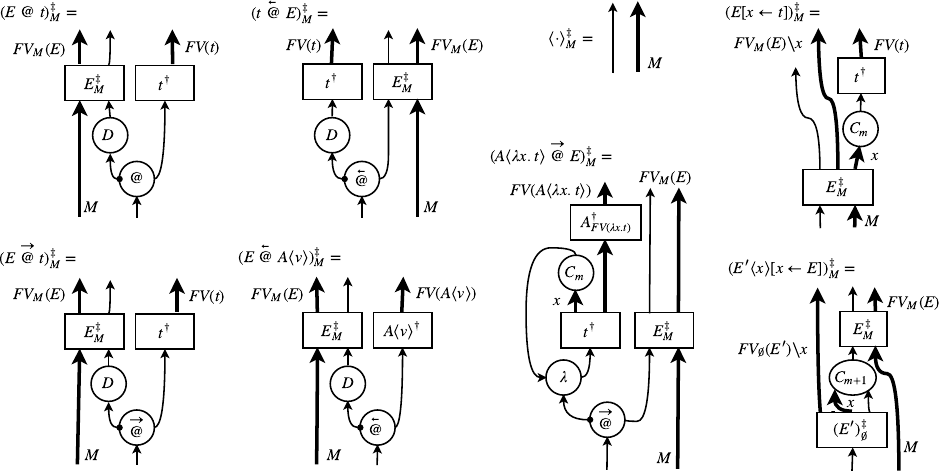}
  \caption{Inductive Translation of Evaluation Contexts}
  \label{fig:EvalCtxtTranslation}
 \end{minipage}
 \begin{minipage}[c]{\linewidth}
  \centering
  \vspace{3ex}
  \includegraphics[scale=0.7]{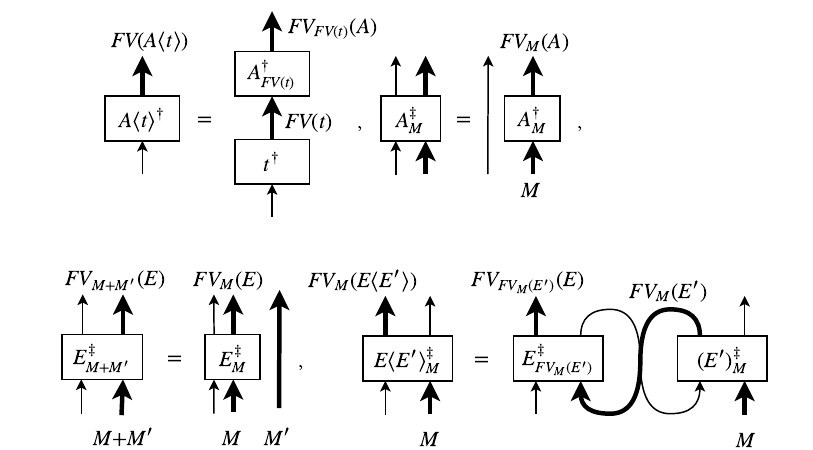}
  \caption{Decompositions of Translations}
  \label{fig:Decomposition}
 \end{minipage}
\end{figure}

The translations are based on the so-called ``call-by-value''
translation of linear logic to intuitionistic logic (e.g.\
\cite{MaraistOTW99}).
Only the translation of abstraction can be accompanied by a
$\oc$-box, which captures the fact
that only values (i.e.\ abstractions) can be duplicated
(see the basic rule~(\ref{eq:10}) in Fig.~\ref{fig:MockMachine}).
Note that only one $\lb{C}$-node is introduced for each bound
variable. This is vital to achieve constant cost in looking up a
variable, namely in realising the basic rule~(\ref{eq:9}) in
Fig.~\ref{fig:MockMachine}.

The two mutually recursive translations $(\cdot)^\dag$ and
$(\cdot)^\ddag$ are related by the decompositions in
Fig.~\ref{fig:Decomposition}, which can
be checked by straightforward induction.
In the third decomposition, $M'$ is not captured in $E$.
Note that, in general, the translation $\plug{E}{t}^\dag$ of a term in
an evaluation context cannot be decomposed into translations
$E^\ddag_{\FV(t)}$ and $t^\dag$.
This is because a translation
$(\appLR{\plug{A}{\abs{x}{}{t}}}{E})^\ddag_M$ lacks a $\oc$-box
structure, compared to a translation
$(\appLR{\plug{A}{\abs{x}{}{t}}}{u})^\dag$.

Translation of an evaluation context can be traversed by pass
transitions without raising the rewrite flag $\lambda$ or $\oc$, as
the following lemma states.
\begin{lem}
 \label{lem:PassEvalCtxt}
 Let $E$ be an evaluation context and $M$ be a multiset.
 For any graph $G(1,0)$ that has $E^\ddag_M$ as a sub-graph and has no
 edge between links,
 let $e_i$ and $e_o$ be the main input and the main output of the
 sub-graph $E^\ddag_M$, respectively.
 For any pair $(S,B)$ of a computation stack and a box stack,
 there exists a pair $(S',B')$ of a computation stack and a box stack,
 such that
 $((G,e_i),(\up,\square,S,B)) \to^* ((G,e_o),(\up,\square,S',B'))$
 is a sequence of pass transitions.
\end{lem}
\begin{proof}
 \newcommand{\pseq}{\overset{p}{\to}^*}

 By induction on $E$. We use $\pseq$ to denote a
 sequence of pass transitions in this proof.
 In the base case, where $E = \emptyCtxt$, the main input $e_i$ and
 the main output $e_o$ coincides. An empty sequence suffices.

 The first class of inductive cases are when the top-level constructor
 of $E$ is function application, e.g.\ $E \equiv \appLazy{E'}{t}$.
 Let $e'_i$ and $e'_o$ be the main input and the main output of the
 sub-graph $(E')^\ddag_M$, respectively.
 In each of the cases, there exist stacks $S''$ and $B''$ such that
 $((G,e_i),(\up,\square,S,B)) \pseq ((G,e'_i),(\up,\square,S'',B''))$.
 By the induction hypothesis, there exist stacks $S'$ and $B'$ such
 that
 $((G,e'_i),(\up,\square,S'',B'')) \pseq
 ((G,e'_o),(\up,\square,S',B'))$.
 Combining these two sequences yields a desired sequence, because
 $e'_o = e_o$.

 The inductive case where $E \equiv E'[x \leftarrow t]$ simply boils
 down to the induction hypothesis.

 The last inductive case is when
 $E \equiv \plug{E_1}{x}[x \leftarrow E_2]$.
 Let $e'_i$ and $e'_o$ be the main input and the main output of
 the sub-graph $(E_1)^\ddag_\emptyset$, and
 $e''_i$ and $e''_o$ be the main input and the main output of
 the sub-graph $(E_2)^\ddag_M$, respectively.
 We have $e_i = e'_i$ and $e_o = e''_o$. The link $e'_o$ is an
 input of a $\lb{C}$-node and $e''_i$ is the output of the
 $\lb{C}$-node.
 By the induction hypothesis on $E_1$, there exist stacks $S''$ and
 $B''$ such that
 $((G,e'_i),(\up,\square,S,B)) \pseq
 ((G,e'_o),(\up,\square,S'',B''))$.
 This sequence can be followed by a pass transition
 $((G,e'_o),(\up,\square,S'',B'')) \to
 ((G,e''_i),(\up,\square,S'',e'_o:B''))$.
 By the induction hypothesis on $E_2$, there exist stacks $S'$ and
 $B'$ such that
 $((G,e''_i),(\up,\square,S'',e'_o:B'')) \pseq
 ((G,e''_o),(\up,\square,S',B'))$.
 Combining all these sequences yields a desired sequence, because
 $e_i = e'_i$ and $e_o = e''_o$.
\end{proof}

The inductive translations lift to
a binary relation between closed enriched terms and graph states.
\begin{defi}[Binary relation $\preceq$]
 The binary relation $\preceq$ is defined by
 $\plug{E}{\W{t}} \preceq
 ((E^\ddag \circ t^\dag,e), (\up,\square,S,B))$, where:
 (i) $\plug{E}{\W{t}}$ is a closed enriched term, and
 $(E^\ddag \circ t^\dag,e)$ is given by
 \parbox[c]{3.2cm}{\includegraphics[scale=0.9]{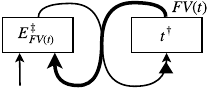}} with no
 edges between links,
 and (ii) there is an execution
 $\Init(E^\ddag \circ t^\dag) \to^*
 ((E^\ddag \circ t^\dag,e),(\up,\square,S,B))$
 of pass transitions only,
 in which $e$ appears as a token position only in the last state.
\end{defi}
A special case is $\W{t} \preceq \Init(t^\dag)$, which relates the
starting points of an evaluation and an execution.
We require the graph $E^\ddag \circ t^\dag$ to have no edges between
links, which is based on the discussion at the end of
Sec.~\ref{sec:Trans} and essential for time-cost analysis.
Although the definition of the translations uses edges between links
(e.g.\ the translation $x^\dag$), the graphs
$E^\ddag$ and $t^\dag$ can be constructed without introducing any edge
between links. For example, a variable can be translated into a single
link that is both an input and an output, and outputs of the
translation $(\appLazy{t}{u})^\dag$ can be simply the union of outputs
of $t^\dag$ and $u^\dag$.
The graph $E^\ddag \circ t^\dag$ can be constructed by
identifying interfaces of $E^\ddag$ and $t^\dag$, instead of
introducing edges.

%

The binary relation $\preceq$ gives a weak simulation of the
sub-machine semantics by the graph-rewriting machine.
The weakness, i.e.\ the extra transitions compared with reductions,
comes from the locality of pass transitions and the bureaucracy of
managing $\oc$-boxes.
\begin{thm}[Weak simulation with global bound]
 \label{thm:WeakSimulation}
 \noindent
 \begin{enumerate}
  \item If $\plug{E}{\W{t}} \multimap_\chi \plug{E'}{\W{t'}}$ and
	$\plug{E}{\W{t}} \preceq ((E^\ddag \circ t^\dag,e),\delta)$
	hold, then there exists a number $n \leq 3$ and	a graph state
	$(((E')^\ddag \circ (t')^\dag,e'),\delta')$ such that
	$((E^\ddag \circ t^\dag,e),\delta) \to_\epsilon^n \to_\chi
	(((E')^\ddag \circ (t')^\dag,e'),\delta')$
	and $\plug{E'}{\W{t'}} \preceq
	(((E')^\ddag \circ (t')^\dag,e'),\delta')$.
  \item If $\plug{A}{\W{v}} \preceq
	((A^\ddag \circ v^\dag,e),\delta)$ holds,
	then the graph state $((A^\ddag \circ v^\dag,e),\delta)$ is
	initial, from which only the transition
	$\Init(A^\ddag \circ v^\dag) \to_\epsilon
	\Final(A^\ddag \circ v^\dag)$
	is possible.
 \end{enumerate}
\end{thm}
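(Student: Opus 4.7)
The plan is to treat the two parts separately, both by direct analysis of the translation in relation to the transition system.

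For Part 1, I would proceed by case analysis on the basic rule $\mapsto_\chi$ underlying the reduction $\plug{E}{\W{t}} \multimap_\chi \plug{E'}{\W{t'}}$. Using the decomposition equations of Figure~\ref{fig:Decomposition}, I would express $E^\ddag \circ t^\dag$ as the context skeleton composed with the translation of the windowed subterm. The premise $\plug{E}{\W{t}} \preceq ((E^\ddag \circ t^\dag,e),\delta)$ pins down the token position $e$ at the root of this subterm translation in direction $\up$, so the local shape of the graph at $e$ is dictated by the head constructor of $t$. For each of the ten basic rules, I would verify that a bounded prefix of pass transitions --- at most three, corresponding to traversing the head application node plus the adjacent $\lambda$/$\oc$-door and, where relevant, a single $\lb{D}$ or $\lb{C}$ node --- brings the machine into a state whose rewrite flag (or whose final $\epsilon$-pass) fires exactly the graph rewrite implementing that basic rule. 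After this transition, reading off the new position and stacks should show both that the rewritten graph equals $(E')^\ddag \circ (t')^\dag$ and that $\plug{E'}{\W{t'}} \preceq$ the new state, since the token then sits at the root of the new windowed subterm.

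For Part 2, I would unfold the translation. Since $v$ is an abstraction, Figure~\ref{fig:TermAnsCtxtTranslation} guarantees that $v^\dag$ is enclosed by a $\oc$-box whose principal door lies on the root edge of $v^\dag$; composing with $A^\ddag$ leaves this $\oc$-node as the first proper node the token meets from the root. The defining clause of $\preceq$ demands an execution in which the position $e$ appears only in the last state. Starting from $\Init(A^\ddag \circ v^\dag)$, the only pass available --- over the $\oc$-node with $\star$ on top of the box stack --- would immediately move the token off the root, so the only way the uniqueness condition can be met is with $e$ at the root and initial stacks, i.e.\ the state already equals $\Init(A^\ddag \circ v^\dag)$. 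That same single $\epsilon$-pass then carries the machine directly to $\Final(A^\ddag \circ v^\dag)$, as claimed.

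The main obstacle will be the $\sigma$ case of Part 1, corresponding to the substitution rule~(\ref{eq:10}). Here the rewrite $\to_\sigma$ of Figure~\ref{fig:RewriteTransitions} duplicates an entire $\oc$-box and reshapes a cluster of $\lb{C}$-nodes into a larger cluster, as in Figure~\ref{fig:CopyExample}. Showing that the rewritten graph equals $(E')^\ddag \circ (t')^\dag$ requires aligning the multiplicity counts provided by $\FV$ with the indegrees of the affected $\lb{C}$-nodes, checking that the $\oc$-box surviving in $\plug{A}{\plug{E}{\W{v}}[x \leftarrow v]}$ sits exactly where the substitution clause of the translation demands, and verifying that the newly produced copy plugs into the $\lb{C}$-node previously responsible for the used occurrence of $x$. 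A subtle point is that, at the moment the rewrite fires, the box stack must carry the address of the specific $\lb{C}_{k+1}$-input attributed to that occurrence of $x$; tracking how this address was pushed during the preceding pass (corresponding to rule~(\ref{eq:9})) is essential to conclude that the updated state lies in $\preceq$. The other $\beta$ cases reduce to analogous but notably simpler bookkeeping.
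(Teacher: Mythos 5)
Your proposal matches the paper's proof: a case analysis over the ten basic rules, using the decomposition equations of Fig.~\ref{fig:Decomposition} to align $E^\ddag \circ t^\dag$ with the local redex, verifying a short prefix of pass transitions followed by the labelled (rewrite or pass) transition, and treating Part~2 by direct inspection of the $\oc$-box at the root of $v^\dag$. The one point the paper makes explicit that you leave implicit is that the applicability of the decomposition equations in the $\beta$ and $\sigma$ cases rests on alpha-renaming guaranteeing no variable capture by $A$, $E$ or $E'$ --- which is precisely what ensures, in your ``main obstacle'' case, that the tracked box-stack address really is an input of the relevant $\lb{C}_{m+1}$-node.
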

\begin{proof}
 %
 For the second half, $e$ is the root of the graph
 $A^\ddag \circ v^\dag$, which means the state
 $((A^\ddag \circ v^\dag, e), \delta)$ is not a result of any pass
 transition. Therefore, by the condition (ii) of the binary relation
 $\preceq$, we have
 $\Init(A^\ddag \circ v^\dag) =
 ((A^\ddag \circ v^\dag, e), \delta)$,
 and one pass transition from this state yields a final state
 $\Final(A^\ddag \circ v^\dag)$.

 For the first half,
 Fig.~\ref{fig:Proof2}, Fig.~\ref{fig:Proof1} and
 Fig.~\ref{fig:Proof3}
 illustrate how the graph-rewriting machine
 simulates each reduction $\multimap$ of the sub-machine semantics.
 Each sequence of transitions $\to$ simulates a single reduction
 $\multimap$.
 Annotations of edges are omitted, and only the first and the last
 states of each sequence are shown, except for the case of the basic
 rule~(\ref{eq:10}).

 Some sequences involve equations that apply the four
 decomposition properties of the translations $(\cdot)^\dagger$ and
 $(\cdot)^\ddagger$, which are given earlier in this section.
 These equations rely on the fact that terms
 are alpha-converted in a form in which all binders introduce distinct
 variables, and reductions with
 labels $\beta$ and $\sigma$ work modulo alpha-equivalence to avoid
 name captures.
 This implies the following.
 \begin{itemize}
  \item Free variables of $u$ are not captured by $A$ in the case of
	the basic rule~(\ref{eq:2}).
  \item Free variables of $\plug{A'}{v}$ are not captured by $A$ in
	the case of the basic rules~(\ref{eq:5}) and~(\ref{eq:8}).
  \item The variable $x$ is not captured by $E$ or $E'$ in
	the case of the basic rules~(\ref{eq:9}) and~(\ref{eq:10}).
  \item In the case of the basic rule~(\ref{eq:10}),
	free variables of $E'$ are not captured by $A$,
	free variables of $v$ are not captured by $E'$,
	and $x$ does not freely appear in $v$.
 \end{itemize}

 Simulation of the basic rule~(\ref{eq:10}) involves duplicating the
 sub-graph $v^\dag$, which is a $\oc$-box.
 Because free variables of the value $v$ are captured by either $E$ or
 $A$, the multiset $\FV(v)$ can be partitioned into two multisets as
 $\FV(v) = M_E + M_A$, such that $M_E$ is the multiset of those
 captured by $E$ and $M_A$ is the multiset of those captured by $A$.
 No variable is contained by both $M_E$ and $M_A$.
 The translations $E^\ddag$ and $A^\dag$ include $\lb{C}$-nodes that
 correspond to $M_E$ and $M_A$, respectively. These $\lb{C}$-nodes
 get extra inputs by the rewrite transition labelled with $\sigma$,
 as represented by the middle state in the simulation sequence.

 In each sequence, let $G_s$ and $G_t$ be the first and the last
 graph, respectively.
 By the condition (ii) of the binary relation
 $\preceq$, there exists an execution
 $\ex : \Init(G_s) \to^* ((G_s, e_1), (\up, \square, S', B'))$ of only
 pass transitions, in which the link $e_1$ (see the figures) appears
 as a token position only once at the end.
 \begin{enumerate}
  \item \label{item:PassUpA}
	In simulation of the basic rules~(\ref{eq:1}),~(\ref{eq:3})
	and~(\ref{eq:6}),
	the figures use $S$ and $B$ instead of $S'$ and $B'$.
	By Lem.~\ref{lem:Continuity}, the result position $e_2$ (see
	the figures) does not appear in the execution $\ex$;
	if this is not the case, $e_1$ would appear more than once in
	$\ex$, which is a contradiction.
	Therefore, $\ex$ followed by the pass transitions shown in the
	figures gives a desired execution that meets the condition
	(ii) of the binary relation $\preceq$.
  \item \label{item:PassUpC}
	In simulation of the basic rule~(\ref{eq:9}),
	the figure uses $S$ and $B$ instead of $S'$ and $B'$.
	Because $x$ is not captured by $E'$, the starting position
	$e_1$ is in fact an input of the $\lb{C}_{m+1}$-node.
	Using Lem.~\ref{lem:Continuity} again in the same way, the
	result position $e_2$ does not appear in the execution $\ex$.
	Therefore, $\ex$ followed by the pass transition shown in the
	figures gives a desired execution that meets the condition
	(ii) of the binary relation $\preceq$.
  \item \label{item:PassDnUp}
	In simulation of the basic rule~(\ref{eq:7}),
	by the reversibility of pass transitions, there exist stacks
	$S$ and $B$ such that: $S' = S$, $B' = \star:B$, and
	the execution $\ex$ can be decomposed into an execution
	$\ex' : \Init(G_s) \to^* ((G_s, e_0), (\up, \square, S, B))$
	and one subsequent pass transition (see the figure for $e_0$).
	In the execution $\ex'$, the link $e_0$ appears as a token
	position only once at the end, which can be checked by
	contradiction as follows.
	\begin{itemize}
	 \item If $e_0$ appears more than once in $\ex'$ and its first
	       appearance is with direction $\dn$,
	       it must be a result of a pass transition.
	       However, no pass transition leads to this situation,
	       because $e_0$ is an input of a function application
	       node. This is a contradiction.
	 \item If $e_0$ appears more than once in $\ex'$ and its first
	       appearance is with direction $\up$,
	       it must be with rewrite flag $\square$, because $\ex'$
	       consists of pass transitions only.
	       Regardless of token data,
	       the first appearance leads to an extra appearance of
	       $e_1$ in $\ex'$, which is a contradiction.
	\end{itemize}
	Given this freshness of $e_0$ in $\ex'$, by
	Lem.~\ref{lem:Continuity}, the result position $e_2$ does not
	appear in the execution $\ex'$.
	Therefore, $\ex$ followed by the pass transitions shown in the
	figures gives a desired execution that meets the condition
	(ii) of the binary relation $\preceq$.
  \item \label{item:BetaRwEnd}
	In simulation of the basic rules~(\ref{eq:2}),~(\ref{eq:5})
	and~(\ref{eq:8}),
	by the reversibility of pass transitions, there exist stacks
	$S$ and $B$ such that
	the execution $\ex$ can be decomposed into an execution
	$\ex' : \Init(G_s) \to^* ((G_s, e_0), (\up, \square, S, B))$
	and at least one subsequent pass transition.
	In the execution $\ex'$, the link $e_0$ appears as a token
	position only once at the end, which can be checked in the
	same manner as the previous case~(\ref{item:PassDnUp}).
	Using this freshness of $e_0$ in $\ex'$ and
	Lem.~\ref{lem:Continuity}, we can conclude that any node that
	interacts with a token in the execution $\ex'$ (i.e.\ that is
	relevant in a pass transition in the execution $\ex'$) belongs
	to $E^\ddag$.
	This means that any pass transition in $\ex'$, on the starting
	graph $G_s$, can be imitated in the resulting graph $G_t$.
	Namely, the link $e_0$ corresponds to the result position
	$e_2$, and $\ex'$ corresponds to an execution
	$\ex'' : \Init(G_t) \to^* ((G_t, e_2), (\up, \square, S, B))$
	of only pass transitions, in which $e_2$ appears only once at
	the end.
	This execution $\ex''$ gives a desired execution that meets
	the condition (ii) of the binary relation $\preceq$.
  \item \label{item:EpsilonRwPass}
	In simulation of the basic rule~(\ref{eq:4}),
	the same reasoning as the previous case~(\ref{item:BetaRwEnd})
	gives an execution
	$\ex'' : \Init(G_t) \to^* ((G_t, e_0), (\up, \square, S, B))$
	of only pass transitions, in which $e_0$ appears only once at
	the end.
	By Lem.~\ref{lem:Continuity}, the result position $e_2$ does
	not appear in the execution $\ex''$.
	Therefore, $\ex''$ followed by pass transitions gives a
	desired execution that meets the condition (ii) of the binary
	relation $\preceq$.
  \item \label{item:SigmaRwEnd}
	In simulation of the basic rule~(\ref{eq:10}),
	by the reversibility of pass transitions, there exist an input
	$e_0$ of the $\lb{C}_{m+1}$-node and stacks
	$S$ and $B$ such that: $S' = S$, $B' = e_0:B$, and
	the execution $\ex$ can be decomposed into an execution
	$\ex' : \Init(G_s) \to^* ((G_s, e_0), (\up, \square, S, B))$
	and one subsequent pass transition that pushes $e_0$ to the
	box stack.
	By Lem.~\ref{lem:Continuity}, the link $e_3$ (see the figure)
	appears in the execution $\ex'$. Analysing this appearance, we
	can conclude that the link $e_0$ is in fact the main output of
	$(E')^\ddag_\emptyset$.
	\begin{itemize}
	 \item If $e_3$ appears with direction $\dn$ in $\ex'$,
	       because $e_3$ is an input of a function application
	       node or a $\lb{C}$-node, this appearance cannot be a
	       result of any pass transition. This is a contradiction.
	 \item If $e_3$ appears with direction $\up$,
	       it must be with rewrite flag $\square$, because $\ex'$
	       consists of pass transitions only.
	       Because $e_3$ is the main input of
	       $(E')^\ddag_\emptyset$, by Lem.~\ref{lem:PassEvalCtxt},
	       this appearance leads to a state
	       whose token position is the main output $e'$ of
	       $(E')^\ddag_\emptyset$, direction is $\up$ and rewrite
	       flag is $\square$.
	       One pass transition from the state leads to a state
	       whose token position is $e_1$.
	       This means there exists an execution $\ex'''$ of pass
	       transitions only, via the token position $e_3$ and the
	       second last token position $e'$,
	       to the token position $e_1$.
	       Because pass transitions are deterministic, it is
	       either:
	       (1) $\ex$ is strictly a sub-sequence of $\ex'''$,
	       (2) $\ex = \ex'''$, or
	       (3) $\ex'''$ is strictly a sub-sequence of $\ex$.
	       Because $\ex$ is followed by a pass transition and a
	       rewrite transition as shown in the figure, the case (1)
	       is impossible.
	       Because $e_1$ appears only once
	       at the end in the execution $\ex$, the case (3) leads
	       to a contradiction. Therefore we can conclude
	       that (2) is the case, i.e.\ $\ex = \ex'''$.
	       This means $e' = e_0$, i.e.\ $e_0$ is the main output
	       of $(E')^\ddag_\emptyset$.
	\end{itemize}
	As a consequence, the link $e_2$ is indeed the result
	position, corresponding to the link $e_0$.

	The rest of the reasoning is similar to the
	case~\ref{item:BetaRwEnd}. In the execution $\ex$ to the
	starting position $e_1$, the token does not interact with
	nodes that belong to $A^\dag$ or $v^\dag$; otherwise, by
	Lem.~\ref{lem:Continuity}, $e_1$ would have an extra
	appearance in $\ex$, which is a contradiction.
	For the same reason, the execution $\ex'$ to the link $e_0$
	does not involve any interaction of the token with the
	$\lb{C}_{m+1}$-node, and hence $e_0$ appears only once at the
	end in the execution $\ex'$.
	As a result, the execution $\ex'$ gives an execution
	$\ex'' : \Init(G_t) \to^* ((G_t, e_2), (\up, \square, S, B))$
	of only pass transitions on the resulting graph $G_t$, in
	which $e_2$ appears only once at the end.
	This execution $\ex''$ gives a desired execution that meets
	the condition (ii) of the binary relation $\preceq$. \qedhere
 \end{enumerate}
\end{proof}

\newcommand{\simsc}{0.65}

\begin{figure}[b]
 \begin{minipage}{\hsize}
  (\ref{eq:3}) \quad
  $\plug{E}{\W{\appLR{t}{u}}}
  \multimap_\epsilon \plug{E}{\appLR{\W{t}}{u}}$
 \end{minipage} \\
 \includegraphics[scale=\simsc]{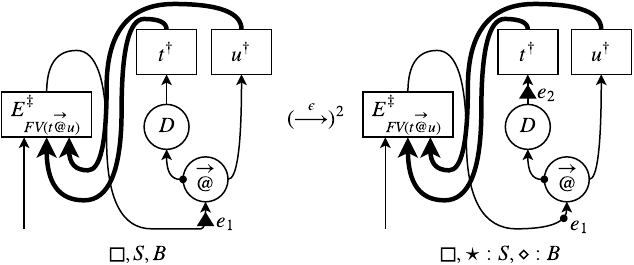} \\
 \begin{minipage}{\hsize}
  (\ref{eq:4}) \quad
  $\plug{E}{\appLR{\plug{A}{\W{\abs{x}{}{t}}}}{u}}
  \multimap_\epsilon \plug{E}{\appLR{\plug{A}{\abs{x}{}{t}}}{\W{u}}}$
 \end{minipage} \\
 \includegraphics[scale=\simsc]{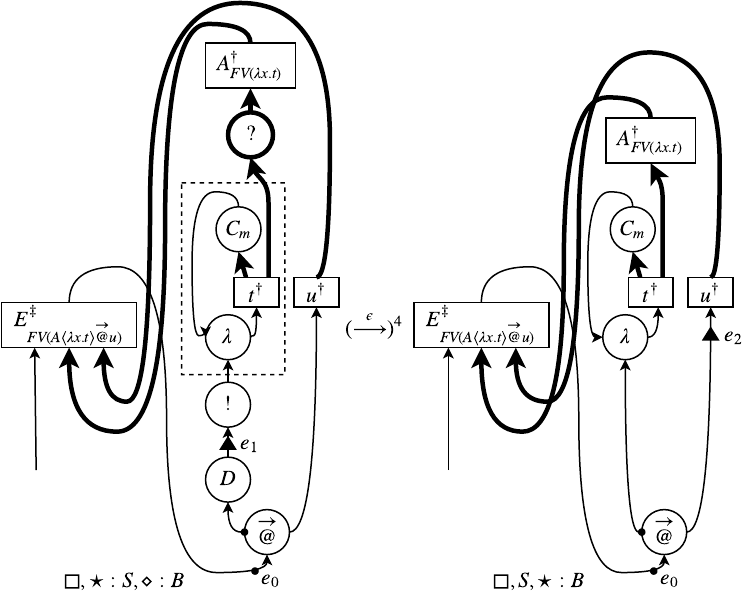} \\
 \begin{minipage}{\hsize}
  (\ref{eq:5}) \quad
  $\plug{E}{\appLR{\plug{A}{\abs{x}{}{t}}}{\plug{A'}{\W{v}}}}
  \multimap_\beta \plug{E}{\plug{A}{\W{t}[x \leftarrow \plug{A'}{v}]}}$
 \end{minipage} \\
 \includegraphics[scale=\simsc]{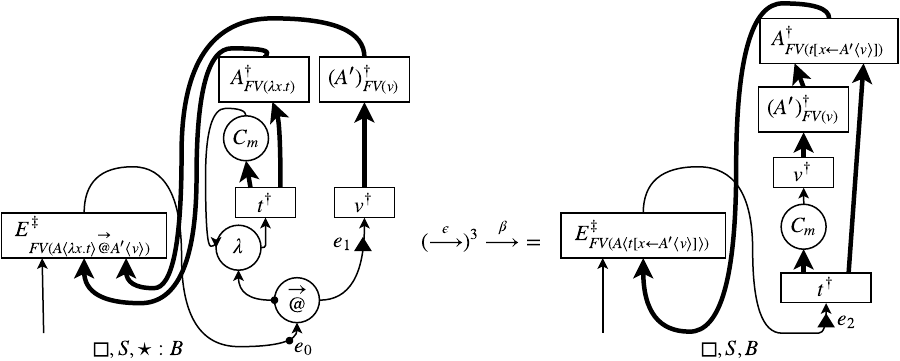}
 \caption{Illustration of Simulation: Left-to-Right Call-by-Value Application}
 \label{fig:Proof2}
\end{figure}

\begin{figure}[p]
 \begin{minipage}{\hsize}
  (\ref{eq:1}) \quad
  $\plug{E}{\W{\appLazy{t}{u}}}
  \multimap_\epsilon \plug{E}{\appLazy{\W{t}}{u}}$
 \end{minipage} \\
 \includegraphics[scale=\simsc]{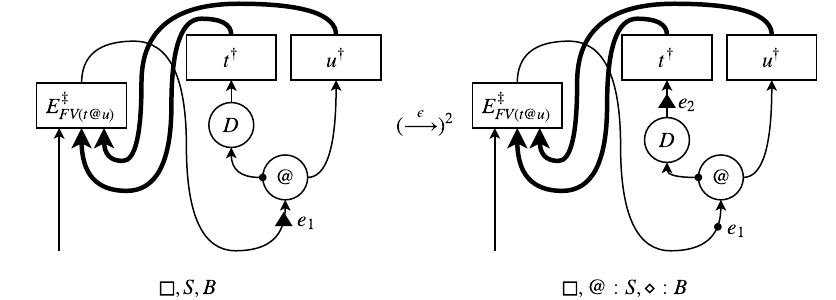} \\
 \begin{minipage}{\hsize}
  (\ref{eq:2}) \quad
  $\plug{E}{\appLazy{\plug{A}{\W{\abs{x}{}{t}}}}{u}}
  \multimap_\beta \plug{E}{\plug{A}{\W{t}[x \leftarrow u]}}$
 \end{minipage} \\
 \includegraphics[scale=\simsc]{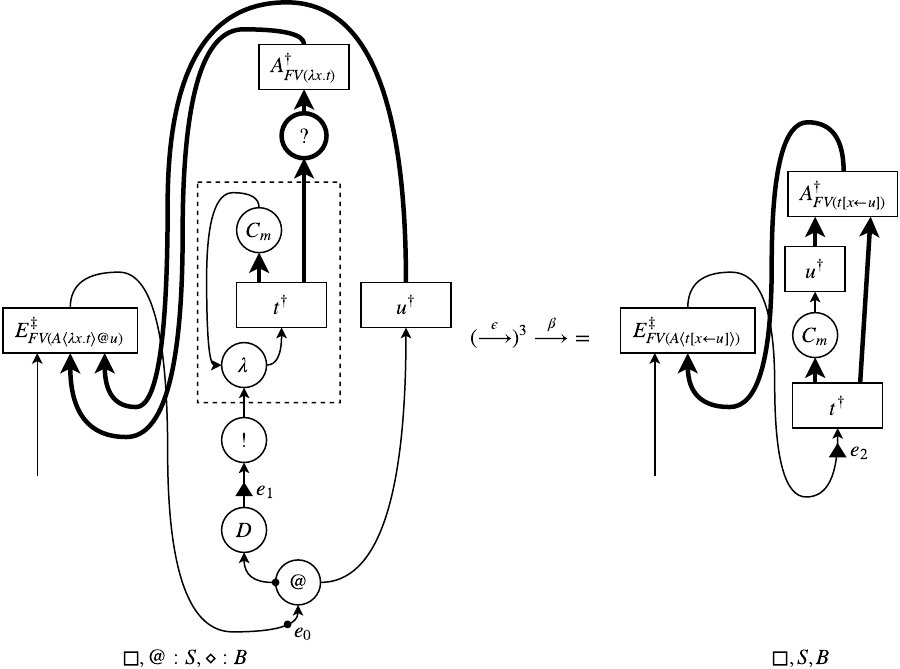} \\
 \begin{minipage}{\hsize}
  (\ref{eq:9}) \quad
  $\plug{E}{\plug{E'}{\W{x}}[x \leftarrow \plug{A}{u}]}
  \multimap_\epsilon
  \plug{E}{\plug{E'}{x}[x \leftarrow \plug{A}{\W{u}}]}$
 \end{minipage} \\
 \includegraphics[scale=\simsc]{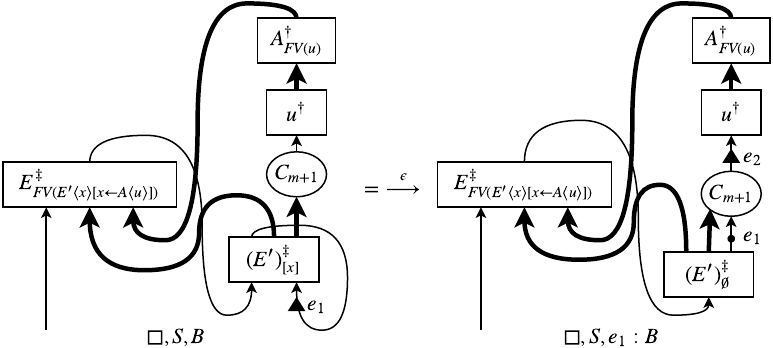} \\
 \begin{minipage}{\hsize}
  (\ref{eq:10}) \quad
  $\plug{E}{\plug{E'}{x}[x \leftarrow \plug{A}{\W{v}}]}
  \multimap_\sigma
  \plug{E}{\plug{A}{\plug{E'}{\W{v}}[x \leftarrow v]}}$
 \end{minipage} \\
 \includegraphics[scale=\simsc]{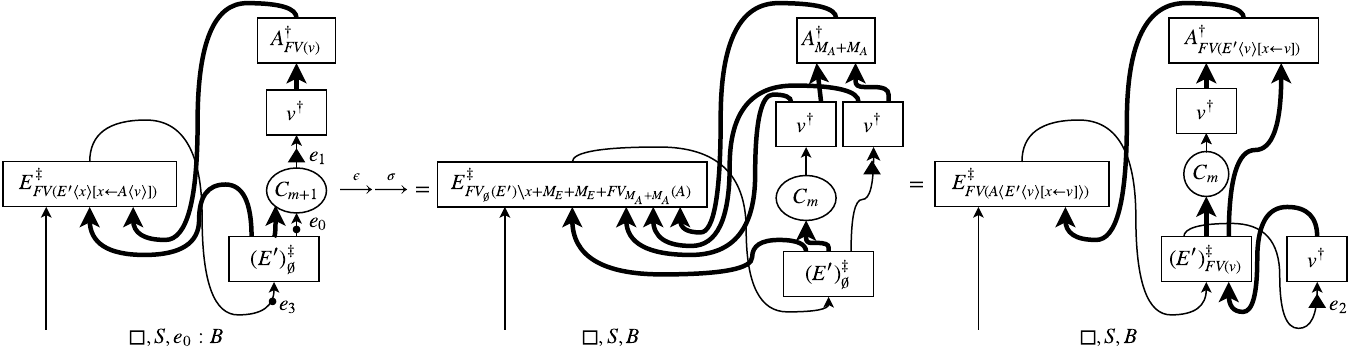}
  \caption{Illustration of Simulation: Call-by-Need Application and
 Explicit Substitutions}
 \label{fig:Proof1}
\end{figure}

\begin{figure}[t]
 \begin{minipage}{\hsize}
  (\ref{eq:6}) \quad
  $\plug{E}{\W{\appRL{t}{u}}}
  \multimap_\epsilon \plug{E}{\appRL{t}{\W{u}}}$
 \end{minipage} \\
 \includegraphics[scale=\simsc]{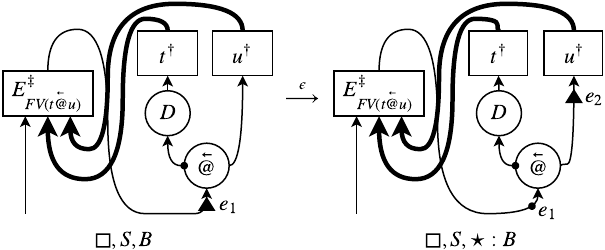} \\
 \begin{minipage}{\hsize}
  (\ref{eq:7}) \quad
  $\plug{E}{\appRL{t}{\plug{A}{\W{v}}}}
  \multimap_\epsilon \plug{E}{\appRL{\W{t}}{\plug{A}{v}}}$
 \end{minipage} \\
 \includegraphics[scale=\simsc]{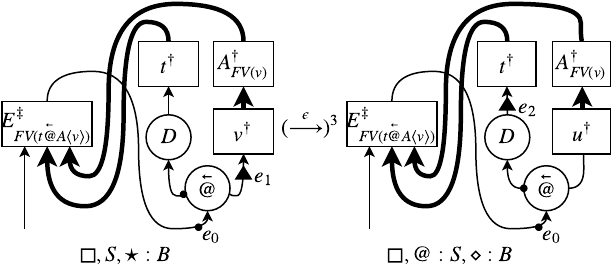} \\
 \begin{minipage}{\hsize}
  (\ref{eq:8}) \quad
  $\plug{E}{\appRL{\plug{A}{\W{\abs{x}{}{t}}}}{\plug{A'}{v}}}
  \multimap_\beta \plug{E}{\plug{A}{\W{t}[x \leftarrow \plug{A'}{v}]}}$
 \end{minipage} \\
 \includegraphics[scale=\simsc]{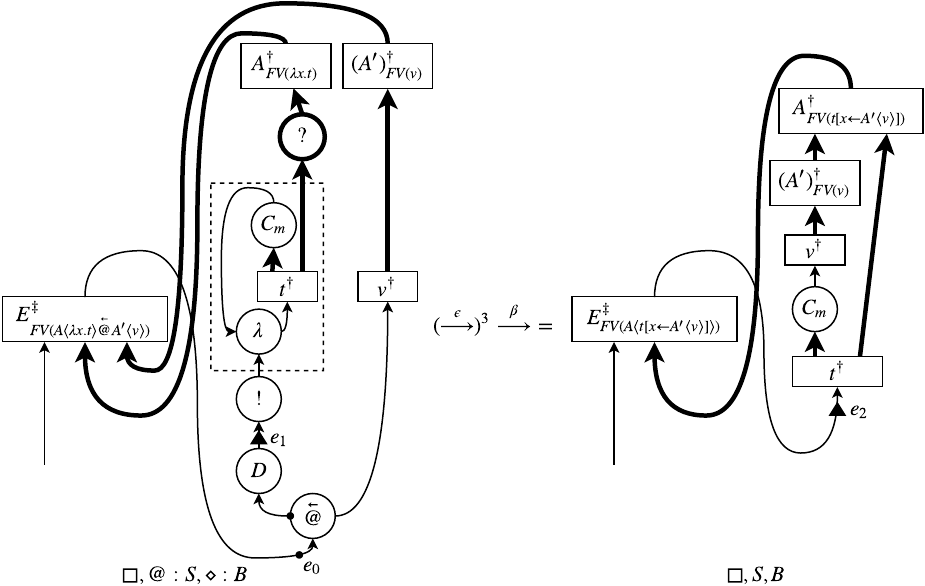}
 \caption{Illustration of Simulation: Right-to-Left Call-by-Value Application}
 \label{fig:Proof3}
\end{figure}


\section{Time-Cost Analysis}
\label{sec:time-cost-analysis}

We analyse how time-efficiently the token-guided graph-rewriting
machine implements evaluation strategies, following the methodology
developed by Accattoli et al.\
\cite{AccattoliBM14,AccattoliSC14,Accattoli16}.
The time-cost analysis focuses on how efficiently an abstract machine
implements an evaluation strategy.
In other words, we are not interested in minimising the number of
$\beta$-reduction steps simulated by an abstract machine.
Our aim is to see if the number of transitions of an
abstract machine is ``reasonable'', compared to the number of
necessary $\beta$-reduction steps determined by a given evaluation
strategy.

Accattoli's methodology assumes that an abstract machine has three
groups of transitions: 1) ``$\beta$-transitions'' that correspond to
$\beta$-reduction in which substitution is delayed, 2) transitions
that perform substitution, and 3) other ``overhead'' transitions.
We incorporate this classification using the labels
$\beta$, $\sigma$ and $\epsilon$ of transitions.

Another assumption of the methodology is that, each step of
$\beta$-reduction is simulated by a single transition of an abstract
machine, and so is substitution of each occurrence of a variable.
This is satisfied by many known abstract machines, including
Danvy and Zerny's storeless abstract machine~\cite{DanvyZ13} that our
sub-machine semantics resembles, however not by the token-guided
graph-rewriting abstract machine.
The machine has ``finer'' transitions and can take
several transitions to simulate a single step of reduction, as we can
observe in Thm.~\ref{thm:WeakSimulation}.
In spite of this mismatch we can still follow the methodology, thanks
to the weak simulation $\preceq$.
It discloses what transitions of the token-guided graph-rewriting
machine exactly correspond to
$\beta$-reduction and substitution, and gives a concrete number of
overhead transitions that the machine needs to simulate
$\beta$-reduction and substitution.

The methodology of time-cost analysis has four steps:
(I) bound the number of transitions required in implementing
evaluation strategies, (II) estimate time cost of each transition,
(III) bound overall time cost of implementing evaluation strategies,
by multiplying the number of transitions with time cost for each
transition, and finally (IV) classify the abstract machine according
to its execution time cost.
Consider now the following taxonomy of abstract
machines introduced in~\cite{Accattoli16}.
\begin{defi}
 [classes of abstract machines {\cite[Def.~7.1]{Accattoli16}}]
 \label{def:taxonomy}
 \noindent
 \begin{enumerate}
  \item An abstract machine is \emph{efficient} if its execution time
	cost is linear in both the input size and the number of
	$\beta$-transitions.
  \item An abstract machine is \emph{reasonable} if its execution time
	cost is polynomial in the input size and the number of
	$\beta$-transitions.
  \item An abstract machine is \emph{unreasonable} if it is not
	reasonable.
 \end{enumerate}
\end{defi}

In our case, the input size is given by 
the \emph{size} $|t|$ of the term $t$, inductively defined by:
\begin{align*}
 |x| &:= 1,
 & |\abs{x}{}{t}| &:= |t| + 1, \\
 |\appLazy{t}{u}| = |\appLR{t}{u}| = |\appRL{t}{u}| &:= |t| + |u| + 1,
 & |t[x \leftarrow u]| &:= |t| + |u| + 1.
\end{align*}
The number of $\beta$-transitions is simply the number of transitions
labelled with $\beta$, which in fact corresponds to the number of
reductions labelled with $\beta$, thanks to
Thm.~\ref{thm:WeakSimulation}.

Given an evaluation $\ev$, the number of occurrences of a label $\chi$
is denoted by $|\ev|_\chi$.
The sub-machine semantics comes with the following quantitative
bounds.
\begin{prop}
 \label{prop:MockMachineBounds}
 For any pure closed term $t$ and
 any evaluation $\ev \colon \W{t} \multimap^* \plug{A}{\W{v}}$ that
 terminates, the number of reductions is bounded by
 $
  |\ev|_\sigma = \mathcal{O}(|\ev|_\beta)$ and $
  |\ev|_\epsilon = \mathcal{O}(|t|\cdot|\ev|_\beta).
 $
\end{prop}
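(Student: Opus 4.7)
The plan is to follow Accattoli's amortised methodology for bounding abstract machine overhead, built around a standard sub-term invariant.

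First, I would establish the \emph{sub-term invariant}: along any reduction sequence starting from $\W{t}$, every value occurring inside an explicit substitution, and more generally every sub-term brought into the window, is alpha-equivalent to a sub-term of the initial pure term $t$. The induction is routine: rules \eqref{eq:2}, \eqref{eq:5} and \eqref{eq:8} attach an explicit substitution $[x \leftarrow u]$ whose body was already in the prior term, rule \eqref{eq:10} only copies an already invariant value, and the $\epsilon$-rules leave the underlying term unchanged.

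Next, for $|\ev|_\sigma = \mathcal{O}(|\ev|_\beta)$, I would amortise as follows. A $\sigma$-reduction via \eqref{eq:10} deposits a value $v$ at the $x$-occurrence and simultaneously rewrites $[x \leftarrow \plug{A}{v}]$ into $[x \leftarrow v]$, so each $\sigma$-step is either an \emph{alias resolution} (bringing a substitution body to value form for the first time) or a \emph{lookup} (consuming an already-resolved substitution). Alias resolutions are bounded by $|\ev|_\beta$ because each resolves a distinct substitution created by a distinct $\beta$-reduction. After a lookup, the window sits at a value at the head of the enriched term, so the only admissible next major step is a $\beta$ via \eqref{eq:2}, \eqref{eq:5} or \eqref{eq:8}, a further cascade $\sigma$-step migrating $v$ outward through another enclosing substitution, or termination. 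Charging each lookup to the $\beta$-step or the terminal step it inevitably precipitates (after a migration that strictly reduces substitution nesting) again bounds lookups by $\mathcal{O}(|\ev|_\beta)$.

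For $|\ev|_\epsilon = \mathcal{O}(|t| \cdot |\ev|_\beta)$, I would set up a potential $\Phi$ on enriched terms measuring the remaining navigation work of the window---for instance the cumulative size of the sub-terms between the window and the next $\beta$- or $\sigma$-redex along the evaluation context. Each $\epsilon$-rule \eqref{eq:1}, \eqref{eq:3}, \eqref{eq:4}, \eqref{eq:6}, \eqref{eq:7}, \eqref{eq:9} strictly decreases $\Phi$ by at least one, since it merely advances the window by a single node, whereas each $\beta$- or $\sigma$-reduction can increase $\Phi$ by at most $\mathcal{O}(|t|)$ because any freshly uncovered sub-term is, by the sub-term invariant, a sub-term of $t$. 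Summing yields $|\ev|_\epsilon \leq \Phi_{\text{init}} + \mathcal{O}(|t|) \cdot (|\ev|_\beta + |\ev|_\sigma) = \mathcal{O}(|t| \cdot |\ev|_\beta)$. The most delicate point is the amortisation for $\sigma$, which requires a careful induction on the evaluation-context grammar of Fig.~\ref{fig:MockMachine}---in particular the production $\plug{E}{x}[x \leftarrow E]$ that permits the window to nest arbitrarily through substitutions---and must be carried out uniformly across the three application constructors $\appLazy{}{}$, $\appLR{}{}$, and $\appRL{}{}$.
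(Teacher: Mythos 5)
Your overall architecture---a sub-term invariant feeding an amortisation that charges $\sigma$-steps to $\beta$-steps and $\epsilon$-steps to a potential---is the right shape, but it is not the paper's route: the paper proves this proposition by setting up a one-to-one correspondence between sub-machine evaluations and derivations of the \emph{linear substitution calculus}, and then importing the known bounds from that literature ($|\ev|_\sigma = \mathcal{O}(|\ev|_\beta)$ from Accattoli and Sacerdoti Coen, the bilinear overhead bound from Accattoli, Barenbaum and Mazza). Your direct argument is essentially what those proofs look like when unfolded, so it could in principle replace the citation, but as written it has two genuine gaps. The first is in the $\sigma$-bound: the claim that after a lookup ``the only admissible next major step is a $\beta$ \ldots{} or termination'' fails for the call-by-value constructors. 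After rule~\eqref{eq:10} deposits a value in the function position of $\appLR{}{}$, rule~\eqref{eq:4} fires and the \emph{entire} evaluation of the argument---containing arbitrarily many $\beta$-, $\sigma$- and $\epsilon$-steps---intervenes before the $\beta$-step you want to charge; dually for $\appRL{}{}$ via rule~\eqref{eq:7}. The charging can still be arranged (each $\beta$-step is fed by at most one function-side and one argument-side cascade, and each cascade step strictly decreases the substitution-nesting depth of the window), but the bounded fan-in of this assignment is exactly the content of the bound and has to be proved, not asserted.

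The second gap is more serious. Your potential $\Phi$, the ``cumulative size of the sub-terms between the window and the next $\beta$- or $\sigma$-redex'', does not satisfy the claimed $\mathcal{O}(|t|)$ increase at principal steps. A $\beta$-step can drop the window onto a variable heading a chain $[x_1 \leftarrow x_2][x_2 \leftarrow x_3]\cdots$ of variable-bodied explicit substitutions, each link of which costs one rule~\eqref{eq:9} step before the next $\sigma$-redex is reached. Each link is a sub-term of $t$ of size one, so the sub-term invariant tells you nothing about the \emph{number} of links: that number is governed by how many earlier $\beta$-steps (possibly acting on duplicated application nodes) created such substitutions, i.e.\ by $|\ev|_\beta$ rather than by $|t|$. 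So a single principal step can increase your $\Phi$ by an amount not bounded by $|t|$, and the final summation does not go through. The standard repair is a \emph{global} potential that pre-pays for each explicit substitution at the $\beta$-step creating it; alternatively, observe that rule~\eqref{eq:9} steps are in bijection with rule~\eqref{eq:10} steps (the window enters a substitution body only via~\eqref{eq:9} and leaves only via~\eqref{eq:10}, and ends outside all bodies), so they are already covered by the $\sigma$-bound, and only the remaining window-moving rules need the per-spine $\mathcal{O}(|t|)$ estimate via the sub-term invariant. Without one of these repairs the $\epsilon$-bound is not established.
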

\begin{proof}
 A term uses a single evaluation strategy,
   either call-by-need, left-to-right call-by-value, or
 right-to-left call-by-value.
 Forgetting the window of an enriched term
 $\plug{E}{\W{t}}$ gives a term $\plug{E}{t}$, which can be seen as a
 term of the linear substitution calculus~\cite{AccattoliK10}.
 This gives an one-to-one correspondence between an
 evaluation by the sub-machine semantics and a ``derivation'' in the
 linear substitution calculus, via the concept of
 ``distillery''~\cite[Sec.~4]{AccattoliBM14}.
 The correspondence is in such a way that it enables us to directly
 apply the bounds about the linear substitution
 calculus~\cite[Cor.~1 \& Thm.~2]{AccattoliSC14} and obtain the first
 equation.

 The second equation is proved by combining the first equation and
 an equation
 $|\ev|_\epsilon = \mathcal{O}(|t|\cdot(|\ev|_\beta + |\ev|_\sigma))$.
 This auxiliary equation can be proved using ideas from
 Accattoli et al.'s analysis of various abstract
 machines~\cite[Thm.~11.3 \& Thm.~11.5]{AccattoliBM14}, as below.


 For any enriched term $\plug{E'}{\W{t'}}$ that appears in the
 evaluation
 $\ev \colon \W{t} \multimap^* \plug{A}{\W{v}}$, we define two
 measures.
 The first measure $\#_1(\plug{E'}{\W{t'}})$ is defined by:
 $|t'| + |u|$ if $E'$ is in the form of
 $\plug{E''}{\appLazy{\plug{A'}{\cdot}}{u}}$,
 $\plug{E''}{\appLR{\plug{A'}{\cdot}}{u}}$, or
 $\plug{E''}{\appRL{u}{\plug{A'}{\cdot}}}$;
 and $|t'|$ otherwise.
 By Lem.~\ref{lem:SubMachineInvariants}, both $t'$ and $u$ above are
 sub-terms of $t$, and we have
 $\#_1(\plug{E'}{\W{t'}}) \leq 2 \cdot |t|$.
 The second measure $\#_2(E')$ is on $E'$ only, and defined
 inductively as below.
 \begin{align*}
  \#_2(\emptyCtxt) := 0, \quad
  \#_2(\plug{E''}{x}[x \leftarrow E'''])
  &:= \#_2(E'') + \#_2(E''') + 1, \\
  \#_2(\appLazy{E''}{t}) = \#_2(\appLR{E''}{t})
  = \#_2(\appLR{\plug{A}{v}}{E''})
  &:= \#_2(E''), \\
  \#_2(\appRL{t}{E''})
  = \#_2(\appRL{E''}{\plug{A}{v}})
  = \#_2(E''[x \leftarrow t''])
  &:= \#_2(E'').
 \end{align*}

 Because the basic
 rules~(\ref{eq:1}),~(\ref{eq:3}),~(\ref{eq:4}),~(\ref{eq:6})
 and~(\ref{eq:7}) strictly reduce the measure $\#_1$,
 these rules can be consecutively applied at most
 $2 \cdot |t|$ times.
 The evaluation $\ev$ can be seen as applications of these rules
 interleaved with other rules, so the total number of applications of
 these five basic rules can be bounded by
 $\mathcal{O}(|t| \cdot (|\ev|_\beta + |\ev|_\sigma + |\ev|_9))$,
 where $|\ev|_9$ denotes the total number of applications of the basic
 rule~(\ref{eq:9}).

 The measure $\#_2$ is increased only by the basic rule~(\ref{eq:9})
 and decreased only by the basic rule~(\ref{eq:10}).
 Both the increase and the decrease are of one.
 Because the measure $\#_2$ gives zero for both
 $\W{t}$ and $\plug{A}{\W{v}}$, namely
 $\#_2(\emptyCtxt) = \#_2(A) = 0$, the basic rule~(\ref{eq:9})
 must be applied as many times as the basic rule~(\ref{eq:10}) in the
 evaluation $\ev$.
 This means $|\ev|_\sigma = |\ev|_9$.
 
 Combining the bound
 $\mathcal{O}(|t| \cdot (|\ev|_\beta + |\ev|_\sigma + |\ev|_9))$ with
 the equation $|\ev|_\sigma = |\ev|_9$ gives
 the auxiliary equation on $|\ev|_\epsilon$.
\end{proof}

We use the same notation $|\ex|_\chi$, as for an evaluation, to denote
the number of occurrences of each label $\chi$ in an execution $\ex$.
Additionally the number of rewrite transitions with the label
$\epsilon$, i.e.\ those that eliminates a $\oc$-box structure, is
denoted by $|\ex|_\mathit{\epsilon R}$.
Note that pass transitions are all labelled with $\epsilon$, and
hence $|\ex|_\mathit{\epsilon R} \leq |\ex|_\epsilon$.
The following proposition completes the first step of
the cost analysis.
\begin{prop}[Soundness \& completeness, with number bounds]
 \label{prop:GraphRewriterBounds}
 For any pure closed term $t$,
 an evaluation $\ev \colon \W{t} \multimap^* \plug{A}{\W{v}}$
 terminates with the enriched term $\plug{A}{\W{v}}$ if and only if an
 execution
 $\ex \colon
 \Init(t^\dag) \to^* \Final(A^\ddag \circ v^\dag)$
 terminates with the graph $A^\ddag \circ v^\dag$.
 Moreover the number of transitions is bounded by
$  |\ex|_\beta = |\ev|_\beta$,
$  |\ex|_\sigma = \mathcal{O}(|\ev|_\beta)$,
$  |\ex|_\epsilon = \mathcal{O}(|t|\cdot|\ev|_\beta)$,
$  |\ex|_\mathit{\epsilon R} = \mathcal{O}(|\ev|_\beta)$.
\end{prop}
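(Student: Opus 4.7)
The plan is to combine the weak simulation $\preceq$ of Theorem~\ref{thm:WeakSimulation} with determinism of both systems. For the ``only if'' direction, I proceed by induction on the length of $\ev$, starting from the base relation $\W{t} \preceq \Init(t^\dag)$, which holds because $t$ is pure closed. At each reduction step $\plug{E}{\W{u}} \multimap_\chi \plug{E'}{\W{u'}}$, Theorem~\ref{thm:WeakSimulation}(1) supplies at most three $\epsilon$-transitions followed by one $\chi$-transition of the machine, preserving $\preceq$. When the reductions reach $\plug{A}{\W{v}}$, Theorem~\ref{thm:WeakSimulation}(2) supplies one final $\epsilon$-transition to $\Final(A^\ddag \circ v^\dag)$, completing the execution.

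For the ``if'' direction, I invoke determinism. The graph-rewriting machine is deterministic by construction, and the sub-machine semantics is deterministic by inspection of Fig.~\ref{fig:MockMachine}. The unique maximal evaluation from $\W{t}$ is therefore either infinite---in which case the ``only if'' direction produces an infinite execution from $\Init(t^\dag)$, contradicting its termination---or it terminates at some $\plug{A'}{\W{v'}}$, and the ``only if'' direction together with determinism of executions forces $A^\ddag \circ v^\dag = (A')^\ddag \circ (v')^\dag$, as required.

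For the quantitative bounds, I accumulate contributions from Theorem~\ref{thm:WeakSimulation}. Reductions and machine transitions labelled $\beta$ (resp.\ $\sigma$) are in one-to-one correspondence, giving $|\ex|_\beta = |\ev|_\beta$ and $|\ex|_\sigma = |\ev|_\sigma = \mathcal{O}(|\ev|_\beta)$ via Proposition~\ref{prop:MockMachineBounds}. For the overhead, each reduction contributes at most three extra $\epsilon$-transitions, and the termination step adds one more, so $|\ex|_\epsilon \leq 4|\ev|_\epsilon + 3(|\ev|_\beta + |\ev|_\sigma) + 1$, which is $\mathcal{O}(|t|\cdot|\ev|_\beta)$ again by Proposition~\ref{prop:MockMachineBounds}.

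The principal obstacle is the sharper bound $|\ex|_{\epsilon R} = \mathcal{O}(|\ev|_\beta)$ on $\epsilon$-rewrite transitions, which cannot be read off directly from the at-most-three envelope of Theorem~\ref{thm:WeakSimulation}. The cleanest route is to observe that an $\epsilon$-rewrite (the second transition of Fig.~\ref{fig:RewriteTransitions}) fires exactly once per elimination of a $\oc$-box, and to show by inspection of the simulation diagrams for rule~(\ref{eq:10}) in Figs.~\ref{fig:Proof2}--\ref{fig:Proof3} that each such elimination can be charged to a unique $\multimap_\sigma$ reduction---namely, the one that exhausts the guarding contraction $\lb{C}_{k+1}$ down to the door-elimination pattern. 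Hence $|\ex|_{\epsilon R} \leq |\ev|_\sigma$, which is $\mathcal{O}(|\ev|_\beta)$ by Proposition~\ref{prop:MockMachineBounds}.
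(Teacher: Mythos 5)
Your overall architecture --- deriving both directions from Theorem~\ref{thm:WeakSimulation} plus determinism, and accumulating the per-reduction envelopes for the first three bounds --- is exactly the paper's route (the paper compresses this to ``a direct consequence of Thm.~\ref{thm:WeakSimulation} and Prop.~\ref{prop:MockMachineBounds}''), and your counting $|\ex|_\epsilon \leq 4|\ev|_\epsilon + 3(|\ev|_\beta + |\ev|_\sigma) + 1$ is sound. The problem is your treatment of $|\ex|_{\mathit{\epsilon R}}$, which you correctly identify as the one bound not following from the envelope, but for which you charge the door eliminations to the wrong reductions.

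An $\epsilon$-labelled rewrite does \emph{not} occur in the simulation of rule~(\ref{eq:10}): the copied $\oc$-box must remain a $\oc$-box, because the invariant $\preceq$ requires the resulting graph to contain $v^\dag$ at the former variable position, and $v^\dag$ (the translation of an abstraction) carries its box. There is no ``exhaustion'' of the contraction node either --- the $\lb{C}_{k+1}$-node merely drops to $\lb{C}_k$ and persists. Where the door elimination actually fires is in the simulation of the $\beta$-rules~(\ref{eq:2}),~(\ref{eq:5}) and~(\ref{eq:8}): its redex consists of a $\lb{D}$-node meeting the principal door of a box, and $\lb{D}$-nodes occur only on the function output of application translations, so the box wrapping an abstraction is opened (doors and $\lb{D}$-node eliminated) precisely as the prelude to the $\lambda$/application elimination. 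Consequently your bound $|\ex|_{\mathit{\epsilon R}} \leq |\ev|_\sigma$ is false: the evaluation of $\appLazy{(\abs{x}{}{\abs{y}{}{y}})}{(\abs{z}{}{z})}$ has $|\ev|_\sigma = 0$ yet performs one $\beta$-rewrite and hence one door elimination. The correct accounting, which is what the paper uses, pairs each $\epsilon$-rewrite with the $\beta$-rewrite it immediately enables, giving $|\ex|_{\mathit{\epsilon R}} = |\ex|_\beta = |\ev|_\beta$, and in particular $\mathcal{O}(|\ev|_\beta)$ without any appeal to Prop.~\ref{prop:MockMachineBounds}.
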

\begin{proof}
 %
 Because the initial term $t$ is closed,
 any enriched term $\plug{E'}{\W{t'}}$ that appears in the
 evaluation $\ev$ is also closed.
 This implies that a reduction is always possible at
 $\plug{E'}{\W{t'}}$ unless it is in the form of $\plug{A'}{v'}$.
 In particular, if $t'$ is a variable, the
 variable is captured by an explicit substitution in $E'$ and the
 basic rule~(\ref{eq:10}) is possible.
 Consequently, if an evaluation of the pure closed term $t$
 terminates, the last enriched term is in the form of $\plug{A'}{v'}$.

 The forward direction of the equivalence, that is, the evaluation
 $\ev$ implies the execution $\ex$, follows from
 Thm.~\ref{thm:WeakSimulation}. The backward direction, that is, the
 execution $\ex$ implies the evaluation $\ev$, also follows from
 Thm.~\ref{thm:WeakSimulation}, because an evaluation of the pure
 closed term $t$ is in the form of $\W{t} \multimap^* \plug{A}{\W{v}}$
 or never terminates.

 Thm.~\ref{thm:WeakSimulation} also gives equations
 $|\ex|_\beta = |\ev|_\beta$, $|\ex|_\sigma = |\ev|_\sigma$ and
 $|\ex|_\epsilon =
 \mathcal{O}(|\ev|_\beta + |\ev|_\sigma + |\ev|_\epsilon)$.
 Combining these with Prop.~\ref{prop:MockMachineBounds} yields the
 desired equations except for the last one
 (i.e.\ $|\ex|_\mathit{\epsilon R} = \mathcal{O}(|\ev|_\beta)$).

 This last equation follows from an equation
 $|\ex|_\mathit{\epsilon R} = |\ex|_\beta$ that can be proved as
 follows.
 For any graph state $((G,e),\delta)$ that appears in the execution
 $\ex \colon
 \Init(t^\dag) \to^* \Final(A^\ddag \circ v^\dag)$,
 we define a measure $\#(G)$ by the number of $\lambda$-nodes that are
 outside any $\oc$-box in the graph $G$.

 Firstly, at any point of the execution $\ex$, the token is inside a
 $\oc$-box if and only if it has the rewrite flag `$\oc$'. This means,
 if a $\lambda$-node gets eliminated by a rewrite transition labelled
 with $\beta$, the $\lambda$-node is outside a $\oc$-box.
 By Lem.~\ref{lem:SubGraph}, each $\oc$-box has exactly one
 $\lambda$-node that directly belongs to it.
 It follows that
 each rewrite transition labelled with $\epsilon$ brings exactly one
 $\lambda$-node outside a $\oc$-box.

 As a result,
 each rewrite transition labelled with $\beta$
 decreases the measure $\#$ by one, and
 each rewrite transition labelled with $\epsilon$
 increases the measure $\#$ by one. No other transitions change the
 measure $\#$.
 Because the measure $\#$ gives zero for the initial and final graph
 states $\Init(t^\dag)$ and $\Final(A^\ddag \circ v^\dag)$,
 namely $\#(t^\dag) = \#(A^\ddag \circ v^\dag) = 0$,
 we have
 $|\ex|_\mathit{\epsilon R} = |\ex|_\beta$.
\end{proof}

The next step in the cost analysis is to estimate the time cost of
each transition.
We assume that graphs are implemented in the following way.
Each $\wn$-node, and its input and output, are identified and
implemented as a single link.
Each link is given by two pointers to its child and its parent.
If a node is not a $\wn$-node, it is given by its label, pointers to
its inputs, and pointers to its outputs; the pointers to inputs are
omitted for $\lb{C}$-nodes.
Additionally, each link and node has a pointer to a $\oc$-node, or a
null pointer, to indicate the $\oc$-box structure it directly belongs
in.
Note that each link has at most three pointers, and
each node has at most two input (resp.\ output) pointers,
which are distinguished.
The \emph{size} of a graph can be estimated using the number of nodes
that are not $\wn$-nodes.
Accordingly, a position of the token is a pointer to a link, a
direction and a rewrite flag are two symbols, a computation stack is a
stack of symbols, and finally a box stack is a stack of symbols and
pointers to links.

Using these assumptions of implementation, we estimate time cost of
each transition.
All pass transitions have constant cost.
Each pass transition looks up one node and its outputs (that
are either one or two) next to the current position, and involves
a fixed number of elements of the token data.
Rewrite transitions with the label $\beta$ have constant cost,
as they change a constant number of nodes and links, and only a
rewrite flag of the token data.
Rewrite transitions with the label $\epsilon$ remove a
$\oc$-box structure. This can be done by traversing nodes from its
principal door, and hence have cost bounded by the size of the
$\oc$-box.
Finally, rewrite transitions with the label $\sigma$ copy a $\oc$-box
structure.
Copying cost is bounded by the size of the $\oc$-box.
Updating cost of the sub-graph $H'$ (see
Fig.~\ref{fig:RewriteTransitions}) is bounded by the number of
auxiliary doors, which is less than the size of the copied
$\oc$-box.
Updating cost of the $\lb{C}$-node is constant, because
$\lb{C}$-nodes do not have pointers to its inputs,
by the assumption about the implementation of graphs.

With the results of the previous two steps, we can now give the
overall time cost of executions and classify our abstract machine.
\begin{thm}[Soundness \& completeness, with cost bounds]
 \label{thm:CostBounds}
 For any pure closed term $t$,
 an evaluation $\ev \colon \W{t} \multimap^* \plug{A}{\W{v}}$
 terminates with the enriched term $\plug{A}{\W{v}}$ if and only if an
 execution
 $\ex \colon
 \Init(t^\dag) \to^* \Final(A^\ddag \circ v^\dag)$
 terminates with the graph $A^\ddag \circ v^\dag$.
 The overall time cost of the execution $\ex$ is bounded by
 $\mathcal{O}(|t|\cdot|\ev|_\beta)$.
\end{thm}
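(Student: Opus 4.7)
The soundness and completeness part is inherited directly from Prop.~\ref{prop:GraphRewriterBounds}, so the plan reduces to establishing the $\mathcal{O}(|t|\cdot|\ev|_\beta)$ bound on the total time cost of the execution $\ex$. The idea is to partition transitions of $\ex$ into the four classes tracked in Prop.~\ref{prop:GraphRewriterBounds} (pass, $\beta$-rewrite, $\epsilon$-rewrite, $\sigma$-rewrite), multiply each transition count by the per-transition cost estimated in the paragraphs immediately preceding the theorem, and sum.

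First I would dispatch the easy classes. The pass transitions number $\mathcal{O}(|t|\cdot|\ev|_\beta)$ by Prop.~\ref{prop:GraphRewriterBounds}, and each costs $\mathcal{O}(1)$, yielding $\mathcal{O}(|t|\cdot|\ev|_\beta)$. The $\beta$-rewrite transitions number $|\ev|_\beta$ and each costs $\mathcal{O}(1)$, yielding $\mathcal{O}(|\ev|_\beta)$. Both are dominated by the target bound, so no further care is needed.

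The subtler classes are the $\epsilon$-rewrite and $\sigma$-rewrite transitions, whose per-transition cost is bounded, respectively, by the number of auxiliary doors of a $\oc$-box and by the size of a copied $\oc$-box. The crucial input here is the sub-graph property (Lemma~\ref{lem:SubGraph}): every $\oc$-box appearing in any reachable graph along the execution starting from $\Init(t^\dag)$ is already a sub-graph of $t^\dag$. A straightforward induction on $t$ gives $|t^\dag| = \mathcal{O}(|t|)$, so every such $\oc$-box has size $\mathcal{O}(|t|)$, and hence each rewrite transition in these two classes costs $\mathcal{O}(|t|)$. Combining with the bounds $|\ex|_{\epsilon R}, |\ex|_\sigma = \mathcal{O}(|\ev|_\beta)$ from Prop.~\ref{prop:GraphRewriterBounds} yields $\mathcal{O}(|t|\cdot|\ev|_\beta)$ for each class.

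The hard part is this box-size estimation: without Lemma~\ref{lem:SubGraph} one would worry that copied $\oc$-boxes could themselves be duplicated, producing nested copies whose sizes compound across the execution. The lemma rules this out by observing that rewrite transitions never introduce, expand, or reduce a single $\oc$-box, so the set of $\oc$-box shapes appearing throughout $\ex$ is a subset of the $\oc$-boxes of $t^\dag$, uniformly bounded in size by $\mathcal{O}(|t|)$. Once this uniform bound is in place, summing the four class contributions gives the overall cost $\mathcal{O}(|t|\cdot|\ev|_\beta)$, completing the theorem and placing the machine in the \emph{efficient} class of Def.~\ref{def:taxonomy}.
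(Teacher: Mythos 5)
Your proposal is correct and follows essentially the same route as the paper: soundness/completeness is inherited from Prop.~\ref{prop:GraphRewriterBounds}, the non-constant rewrite costs are bounded by the size of a $\oc$-box, which the sub-graph property (Lem.~\ref{lem:SubGraph}) bounds by $|t^\dag| = \mathcal{O}(|t|)$, and the per-class sums give $\mathcal{O}(|t|\cdot|\ev|_\beta)$. No substantive differences from the paper's argument.
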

\begin{proof}
 Non-constant cost of rewrite transitions is the size of a $\oc$-box.
 By Lem.~\ref{lem:SubGraph}, this size is less than the size of the
 initial graph $t^\dag$, which can be bounded by the size
 $|t|$ of the initial term.
 Therefore any non-constant cost of each rewrite transition, in the
 execution $\ex$, can be also bounded by $|t|$.
 By Prop.~\ref{prop:GraphRewriterBounds},
 the overall time cost of rewrite transitions labelled with $\beta$ is
 $\mathcal{O}(|\ev|_\beta)$, and that of the other rewrite transitions
 and pass transitions is $\mathcal{O}(|t|\cdot|\ev|_\beta)$.
\end{proof}

Note that the time cost of constructing the initial graph $t^\dag$,
and attaching a token to it, does not affect the bound
$\mathcal{O}(|t|\cdot|\ev|_\beta)$, because it can be done in linear
time with respect to $|t|$. This is thanks to the assumption about
implementation, namely that $\wn$-nodes and input pointers of
$\lb{C}$-nodes are omitted.

\begin{cor}
 \label{cor:classify}
 The token-guided graph-rewriting machine is an efficient abstract
 machine implementing call-by-need, left-to-right call-by-value and
 right-to-left call-by-value evaluation strategies, in the sense of
 Def.~\ref{def:taxonomy}.
\end{cor}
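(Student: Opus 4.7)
The plan is to derive the corollary as essentially an immediate consequence of Thm.~\ref{thm:CostBounds} combined with the match between the $\beta$-count of executions and that of evaluations. First, I would invoke Thm.~\ref{thm:CostBounds} to obtain, for any terminating evaluation $\ev\colon \W{t} \to^* \plug{A}{\W{v}}$ of a pure closed term $t$ in one of the three strategies, a corresponding execution $\ex$ whose overall time cost is bounded by $\mathcal{O}(|t|\cdot|\ev|_\beta)$. This is the quantitative content; what remains is purely a matter of matching this bound against Def.~\ref{def:taxonomy}.

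Next, I would observe that the relevant notion of ``number of $\beta$-transitions'' in Def.~\ref{def:taxonomy} is $|\ex|_\beta$, and that by Prop.~\ref{prop:GraphRewriterBounds} we have $|\ex|_\beta = |\ev|_\beta$. Substituting this identity into the bound obtained from Thm.~\ref{thm:CostBounds} rewrites the overall time cost as $\mathcal{O}(|t|\cdot|\ex|_\beta)$. This product form is manifestly linear in $|t|$ (for fixed $|\ex|_\beta$) and linear in $|\ex|_\beta$ (for fixed $|t|$), which is precisely the condition of Def.~\ref{def:taxonomy}(1) for the machine to be classified as \emph{efficient}.

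Finally, I would remark that this classification holds uniformly across the three strategies, since Thm.~\ref{thm:CostBounds} and the underlying analysis (Prop.~\ref{prop:MockMachineBounds}, Prop.~\ref{prop:GraphRewriterBounds}, Lem.~\ref{lem:SubGraph}, Thm.~\ref{thm:WeakSimulation}) apply to call-by-need, left-to-right call-by-value, and right-to-left call-by-value without distinction; the choice of strategy only determines which application node appears in $t$ and thereby which trajectory the token follows, not the form of the bound. There is really no obstacle to overcome: the technical work has already been discharged by Thm.~\ref{thm:CostBounds}, and the corollary is simply the act of reading off the shape of the bound and noting that it fits the ``efficient'' slot of the taxonomy.
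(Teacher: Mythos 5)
Your proposal is correct and matches the paper's (implicit) argument exactly: the corollary is read off from the $\mathcal{O}(|t|\cdot|\ev|_\beta)$ bound of Thm.~\ref{thm:CostBounds} together with the identity $|\ex|_\beta = |\ev|_\beta$ from Prop.~\ref{prop:GraphRewriterBounds}, which puts the cost in the form required by Def.~\ref{def:taxonomy}(1). The paper offers no further proof for this corollary, so there is nothing you have missed.
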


Cor.~\ref{cor:classify} classifies the graph-rewriting machine as not
just ``reasonable'', but in fact ``efficient''.
In terms of token passing, this efficiency benefits from the graphical
representation of
environments (i.e.\ explicit substitutions in our setting).
The graphical representation is in such a way that each bound
variable is associated with exactly one $\lb{C}$-node, which is
ensured by the translations $(\cdot)^\dag$ and $(\cdot)^\ddag$ and the
rewrite transition $\to_\sigma$.
Excluding any two sequentially-connected $\lb{C}$-nodes is essential
to achieve the ``efficient'' classification, because it yields the
constant cost to look up a bound variable and its associated
computation.

As for graph rewriting, the ``efficient'' classification shows that
introduction of graph rewriting to token passing does not bring in any
inefficiencies.
In our setting, graph rewriting brings in two kinds of non-constant
cost. One is duplication cost of a sub-graph, which is indicated by a
$\oc$-box, and the other is elimination cost of a $\oc$-box that
delimits abstraction.
Unlike the duplication cost, the elimination cost leads to non-trivial
cost that abstract machines in the literature usually do not have.
Namely, our graph-rewriting machine simulates a
$\beta$-reduction step, in which an abstraction constructor is
eliminated and substitution is delayed, at the non-constant cost
depending on the size of the abstraction.
The time-cost analysis confirms that the duplication cost and
the unusual elimination cost have the same impact, on the overall
time cost, as the cost of token passing. What is vital here is the
sub-graph property (Lem.~\ref{lem:SubGraph}), which ensures that the
cost of each duplication and elimination of a $\oc$-box is always
linear in the input size.

\section{Rewriting vs. Jumping}
\label{sec:rewr-vs.-jump}

The starting point of our development is the GoI-style token-passing
abstract machines for call-by-name evaluation, given by Danos and
Regnier~\cite{DanosR96}, and by Mackie~\cite{Mackie95}.
Fig.~\ref{fig:IAM} recalls these token-passing machines as a version
of the DGoIM with the passes-only interleaving strategy (i.e.\ the
DGoIM with only pass transitions).
It follows the convention of Fig.~\ref{fig:PassTransitions}, but
a black triangle in the figure points along (resp.\ against) the
direction of the edge if the token direction is $\up$ (resp.\ $\dn$).
Note that this version uses different token data, to which we will
come back later.

\begin{figure}[t]
 \centering
 \begin{minipage}{.9\linewidth}
  Token data $(d,S,B,E)$ consists of:
  \begin{itemize}
   \item a \emph{direction} defined by
	 $d ::= \up \mid \dn$,
   \item a \emph{computation stack} defined by
	 $S ::= \square \mid \mathsf{A}:S \mid @:S$, and
   \item a \emph{box stack} $B$ and an \emph{environment stack} $E$,
	 both defined by
	 $B,E ::= \square \mid \sigma:B$,
	 using \emph{exponential signatures}
	 $\sigma ::= \star \mid e\cdot\sigma
	 \mid \langle\sigma,\sigma\rangle$
	 where $e$ is any link of the underlying graph.
  \end{itemize}
  Pass transitions:
  \begin{center}
  \includegraphics[scale=0.85]{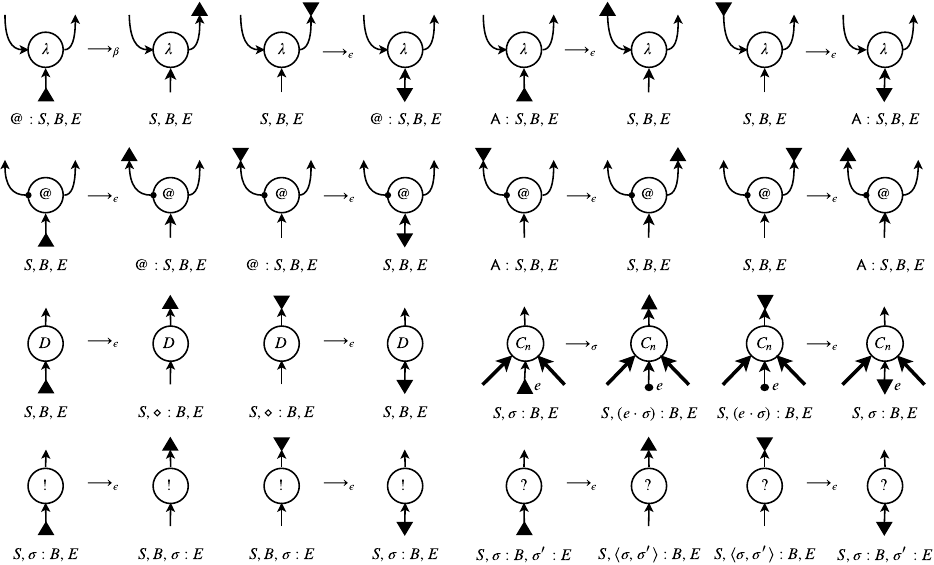}
  \end{center}
  Given a term $t$ with the call-by-need function application ($@$)
  abused, a successful execution
  $((t^\dag,e_t),(\up,\square,\square,\square,\square))\to^*
  ((t^\dag,e_v),(\up,\square,\square,\square,\square))$
  starts at the root $e_t$ of the translation $t^\dag$, and ends at
  the root $e_v$ of the translation $v^\dag$, for some sub-value $v$
  of the term $t$. The value $v$ indicates the evaluation result.
 \end{minipage}
 \caption{Passes-Only DGoIM for Call-by-Name~\cite{DanosR96,Mackie95}}
 \label{fig:IAM}
\end{figure}

Token-passing GoI keeps the underlying graph fixed, and re-evaluates a
term by repeating token moves. It therefore favours space efficiency
at the cost of time efficiency.
Repeating token actions poses a challenge for evaluations in which
duplicated computation must not lead to repeated evaluation,
especially call-by-value
evaluation~\cite{FernandezM02,Schoepp14b,HoshinoMH14,DalLagoFVY15}.
Moreover, in call-by-value repeating token actions raises the
additional technical challenge of avoiding repeating any associated
computational effects~\cite{Schoepp11,MuroyaHH16,DalLagoFVY17}.
A partial solution to this conundrum is to focus on the soundness of
the equational theory, while deliberately ignoring the time
costs~\cite{MuroyaHH16}.
Introduction of graph reduction, the key idea of the DGoIM, is one
total solution in the sense that it avoids repeated token moves and
also improves time efficiency of token-passing GoI.
Another such solution in the literature is introduction of jumps.
We discuss how these two solutions affect machine design and
space efficiency.

The most greedy way of introducing graph reduction, namely the
rewrites-first interleaving we studied in this work, simplifies
machine design in terms of the variety of pass transitions and token
data.
First, some token moves turn irrelevant to an execution.
This is why Fig.~\ref{fig:PassTransitions} for the rewrites-first
interleaving has fewer pass transitions than Fig.~\ref{fig:IAM} for
the passes-only interleaving.
Certain nodes, like `$\wn$', always get eliminated before visited by
the token, in the rewrites-first interleaving.
Accordingly, token data can be simplified.
The box stack and the environment stack used in Fig.~\ref{fig:IAM} are
integrated to the single box stack used in
Fig.~\ref{fig:PassTransitions}.
The integrated stack does not need to carry the exponential
signatures.
They make sure that the token exits $\oc$-boxes
appropriately in the token-passing GoI, by maintaining binary tree
structures, but the token never exits $\oc$-boxes with the
rewrites-first interleaving.
Although the rewrites-first interleaving simplifies token data,
rewriting itself, especially duplication of sub-graphs, becomes the
source of space-inefficiency.

A jumping mechanism can be added on top of the token-passing GoI, and
enables the token to jump along the path it would otherwise
follow step-by-step.
Although no quantitative analysis is provided, it gives time-efficient
implementations of evaluation strategies, namely of call-by-name
evaluation~\cite{DanosR96} and call-by-value
evaluation~\cite{FernandezM02}.
Jumping can reduce the variety of pass transitions, like rewriting, by
letting some nodes always be jumped over. Making a jump is just
changing the token position, so jumping can be described as a
variation of pass transitions, unlike rewriting.
However, introduction of jumping rather complicates token data.
Namely it requires partial duplications of token data, which not only
complicates machine design but also damages space efficiency.
The duplications effectively represent virtual copies of sub-graphs,
and accumulate during an execution.
Tracking virtual copies is the trade-off of keeping the underlying
graph fixed.
Some jumps that do not involve virtual copies can be described as a
form of graph rewriting that eliminates nodes.

\begin{figure}[t]
 \centering
 \begin{minipage}{.9\linewidth}
  Token data $(d,S,B,E)$ consists of:
  \begin{itemize}
   \item a \emph{direction} defined by
	 $d ::= \up \mid \dn$,
   \item a \emph{computation stack} defined by
	 $S ::= \square \mid \mathsf{A}:S \mid @:S$, and
   \item a \emph{box stack} $B$ and an \emph{environment stack} $E$,
	 both defined by
	 $B,E ::= \square \mid (e,E):B$,
	 where $e$ is any link of the underlying graph.
  \end{itemize}
  Pass transitions:
  \begin{center}
   \includegraphics[scale=0.85]{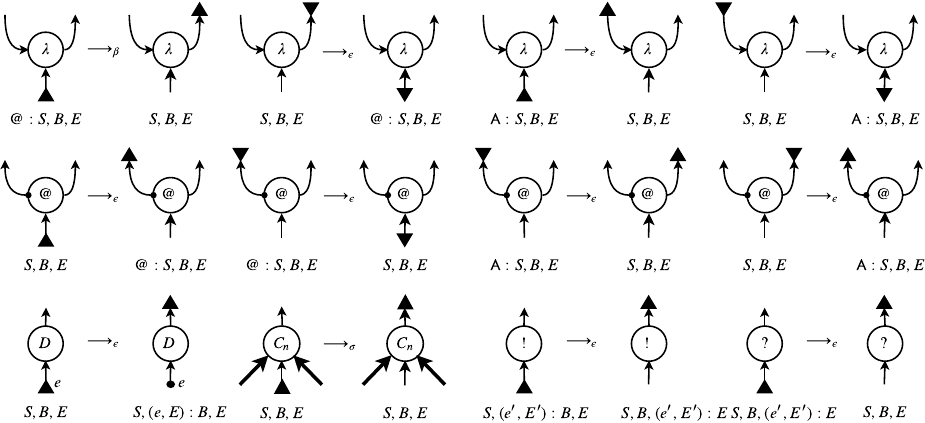}
  \end{center}
  Jump transition:
  $((G,e),(\dn,S,B,(e',E'):E)) \to_\epsilon ((G,e'),(\dn,S,B,E'))$,
  where the old position $e$ is the output of a $\oc$-node:
  \parbox[c]{2em}{\vspace{1ex}\includegraphics[scale=0.85]{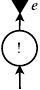}\vspace{1ex}}. \\
  Given a term $t$ with the call-by-need function application ($@$)
  abused, a successful execution
  $((t^\dag,e_t),(\up,\square,\square,\square,\square))\to^*
  ((t^\dag,e_v),(\up,\square,\square,\square,\square))$
  starts at the root $e_t$ of the translation $t^\dag$, and ends at
  the root $e_v$ of the translation $v^\dag$, for some sub-value $v$
  of the term $t$. The value $v$ indicates the evaluation result.
 \end{minipage}
 \caption{Passes-Only DGoIM plus Jumping for
 Call-by-Name~\cite{DanosR96}}
 \label{fig:JAM}
\end{figure}

Finally, we give a quantitative comparison of space usage between
rewriting and jumping.
As a case study, we focus on implementations of call-by-name/need
evaluation, namely on
the passes-only DGoIM recalled in Fig.~\ref{fig:IAM}, our
rewrites-first DGoIM, and the passes-only DGoIM equipped with jumping
that we will recall in Fig.~\ref{fig:JAM}.
A similar comparison is possible for left-to-right call-by-value
evaluation, between our rewrites-first DGoIM and the jumping
machine given by Fern{\'{a}}ndez and Mackie~\cite{FernandezM02}.

Fig.~\ref{fig:JAM} recalls Danos and Regnier's token-passing machine
equipped with jumping~\cite{DanosR96}, which is proved to be
isomorphic to Krivine's abstract machine~\cite{Krivine07} for
call-by-name evaluation.
The machine has pass transitions as well as the \emph{jump} transition
that lets the token jump to a remote position\footnote{Our on-line
visualiser additionally supports this jumping machine.}.
Compared with the token-passing GoI (Fig.~\ref{fig:IAM}),
pass transitions for nodes related to $\oc$-boxes are reduced and
changed, so that the jumping mechanism imitates rewrites involving
$\oc$-boxes. The token remembers its old position, together with its
current environment stack, when passing a $\lb{D}$-node upwards. The
token uses this information and make a jump back in the jump
transition, in which the token exits a $\oc$-box at
the principal door ($\oc$-node) and changes its position to the
remembered link $e'$.

The quantitative comparison, whose result is stated below, shows
partial duplication of token data impacts space usage much more than
duplication of sub-graphs, and therefore rewriting has asymptotically
better space usage than jumping.
\begin{prop}
 After $n$ transitions from an initial state of a graph of size
 $|G_0|$, space usage of three versions of the DGoIM is bounded as in
 the table below.
 \begin{table}[h]
  \begin{tabular}{|c||c||c|c|}
   \hline
   machines & token-passing only & rewriting added & jumping added \\
   & (Fig.~\ref{fig:IAM})
   & (Fig.~\ref{fig:PassTransitions}
   \& Fig.~\ref{fig:RewriteTransitions})
   & (Fig.~\ref{fig:JAM}) \\ \hline
   evaluations implemented
   & call-by-name & call-by-need & call-by-name \\ \hline \hline
   size of graph
   & $|G_0|$ & $\mathcal{O}(n\cdot |G_0|)$ & $|G_0|$ \\ \hline
   size of token position
   & $\log{|G_0|}$ & $\mathcal{O}{(\log{(n\cdot |G_0|)})}$
   & $\log{|G_0|}$ \\ \hline
   size of token data
   & $\mathcal{O}(n\cdot\log{|G_0|})$
   & $\mathcal{O}(n\cdot\log{(n\cdot |G_0|)})$
   & $\mathcal{O}(2^n\cdot\log{|G_0|})$ \\ \hline
  \end{tabular}
 \end{table}
\end{prop}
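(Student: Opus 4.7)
The plan is to establish each of the nine bounds by induction on the number $n$ of transitions, analysing separately how one step can change each of the three space measures for each of the three machines.

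\emph{Size of graph.} For the passes-only machines (Fig.~\ref{fig:IAM} and Fig.~\ref{fig:JAM}), no rule rewrites the underlying graph, so it stays at $|G_0|$ throughout. For the rewrites-first DGoIM, each transition performs at most one rewrite from Fig.~\ref{fig:RewriteTransitions}, and only $\to_\sigma$ can enlarge the graph, by duplicating a $\oc$-box. By Lem.~\ref{lem:SubGraph} every such $\oc$-box is a sub-graph of the initial $G_0$, so one duplication adds $\mathcal{O}(|G_0|)$ nodes and links; summing over $n$ steps yields $\mathcal{O}(n \cdot |G_0|)$. The size-of-position row then follows immediately, since a token position is a pointer to one of the links of the current graph and its bit length is logarithmic in the link count.

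\emph{Size of token data.} The dominant contribution comes from the stacks, and in each case I would reduce the claim to a bound on how much a single transition can enlarge them. For the rewrites-first DGoIM, inspection of Fig.~\ref{fig:PassTransitions} and Fig.~\ref{fig:RewriteTransitions} shows that each transition modifies only a constant number of top elements, with every new element being either a symbol or a single link reference of size $\mathcal{O}(\log(n \cdot |G_0|))$; summing gives $\mathcal{O}(n \cdot \log(n \cdot |G_0|))$. The same argument applied to Fig.~\ref{fig:IAM}, treating each exponential signature as growing by at most one constructor per transition (a fresh $\star$, a single $e \cdot {-}$, or a pairing whose sub-components are already present), yields $\mathcal{O}(n \cdot \log|G_0|)$.

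The main obstacle is the exponential bound for the jumping machine of Fig.~\ref{fig:JAM}. There, an environment-stack entry has shape $(e,E)$ with $E$ itself a copy of the current environment stack, so a single upward pass over a $\lb{D}$-node pushes a snapshot of the entire current environment and thereby roughly doubles its size. Writing $s_k$ for the environment size after $k$ such copying pushes, this yields the recurrence $s_{k+1} \leq 2 s_k + \mathcal{O}(\log|G_0|)$, which solves to $s_n = \mathcal{O}(2^n \cdot \log|G_0|)$. The delicate part is ruling out amortised cancellation: one must verify, by a case analysis of the remaining pass transitions and of the jump transition (which pops only one top entry per step), that no transition shrinks the environment faster than it grew, so that repeated doubling cannot be undone in between pushes and the $2^n$ factor really is reached in the worst case.
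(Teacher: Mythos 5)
Your proposal is correct and follows essentially the same route as the paper: graph size via the sub-graph property (Lem.~\ref{lem:SubGraph}) and the fact that only $\to_\sigma$ enlarges the graph, token position as a pointer of logarithmic size, token data by bounding per-transition stack growth, and exponential growth for the jumping machine from the snapshot-copying of the environment stack. The only cosmetic difference is that the paper witnesses the exponential behaviour with a concrete $\eta$-expanded term rather than discussing amortised cancellation, which in any case is not needed for the stated $\mathcal{O}$ upper bounds.
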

\begin{proof}
 The size $|G_n|$ of the underlying graph after $n$ transitions can be
 estimated using the size $|G_0|$ of the initial graph.
 Our rewrites-first DGoIM is the only one that changes the underlying
 graph during an execution. Thanks to the sub-graph property
 (Lem.~\ref{lem:SubGraph}), the size $|G_n|$ can be bounded as
 $|G_n| = \mathcal{O}(n_\sigma \cdot |G_0|)$,
 where $n_\sigma$ is the number of $\sigma$-labelled transitions in the
 $n$ transitions.
 In the token-passing machines with and without jumping
 (Fig.~\ref{fig:IAM} and Fig.~\ref{fig:JAM}), clearly $|G_n| = |G_0|$.
 In any of the three machines, the token position can be represented in
 the size of $\log{|G_n|}$.

 Next estimation is of token data. Because stacks can have a link of the
 underlying graph as an element, the size of token data after $n$
 transitions depends on $\log{|G_n|}$.
 Both in the token-passing machine (Fig.~\ref{fig:IAM}) and our
 rewrites-first DGoIM, at most one element is pushed in each
 transition.
 Therefore the size of token data is bounded by
 $n\cdot\mathcal{O}(\log{(|G_n|)})$.
 On the other hand, in the jumping machine (Fig.~\ref{fig:JAM}), the
 size of token data, especially the box stack and the environment
 stack, can grow exponentially because of the partial duplication.
 Therefore token data has the size
 $\mathcal{O}(2^n\cdot\log{(|G_n|)})$.
 For example, a term with many $\eta$-expansions, like
 $\app{(\abs{f}{}{
 (\abs{x}{}{\app{(\abs{y}{}{\app{(\abs{z}{}{\app{f}{z}})}{y}})}{x}})
 })}
 {(\abs{w}{}{w})}$,
 causes exponential grow of the box stack in the jumping machine.
\end{proof}

\section{Related Work and Conclusion}
\label{sec:relat-work-concl}

In an abstract machine of any functional programming language,
computations assigned to variables have to be stored for later use.
Potentially multiple, conflicting, computations can be assigned to a
single variable, primarily because of
multiple uses of a function with different arguments.
Different solutions to this conflict lead to
different representations of the storage, some of which are
examined by Accattoli and Barras~\cite{AccattoliB17} from the
perspective of time-cost analysis.
We recall a few solutions below that seem relevant to our token-guided
graph-rewriting.

One solution is to allow at most one assignment to each variable. This
is typically achieved by renaming bound variables during execution,
possibly symbolically. Examples for call-by-need evaluation are
Sestoft's abstract machines~\cite{Sestoft97}, and the storeless and
store-based abstract machines studied by Danvy and
Zerny~\cite{DanvyZ13}.
Our graph-rewriting abstract machine gives another example, as shown
by the simulation of the sub-machine semantics that resembles the
storeless abstract machine mentioned above. Variable renaming is
trivial in our machine, thanks to the use of graphs in which variables
are represented by mere edges.

Another solution is to allow multiple assignments to a variable,
with restricted visibility. The common approach is to pair a sub-term
with its own ``environment'' that maps its free variables to their
assigned computations, forming a so-called ``closure''. Conflicting
assignments are distributed to distinct localised
environments. Examples include Cregut's lazy variant~\cite{Cregut07}
of Krivine's abstract machine for call-by-need evaluation, and
Landin's SECD machine~\cite{Landin64} for call-by-value evaluation.
Fern{\'{a}}ndez and Siafakas~\cite{FernandezS09} refine this approach
for call-by-name and call-by-value evaluations,
based on closed reduction~\cite{FernandezMS05}, which restricts
beta-reduction to closed function arguments.
This suggests that the approach with localised environments can be
modelled in our setting by implementing closed reduction. The
implementation would require an extension of rewrite transitions
and a different strategy to trigger them, namely to eliminate
auxiliary doors of a $\oc$-box.

Finally, Fern{\'{a}}ndez and Siafakas~\cite{FernandezS09} propose
another approach to multiple assignments, in which multiple
assignments are augmented with binary strings so that each occurrence
of a variable can only refer to one of them. This approach is inspired
by the token-passing GoI, namely a token-passing abstract machine for
call-by-value evaluation, designed by Fern{\'{a}}ndez and
Mackie~\cite{FernandezM02}.
The augmenting binary strings come from paths of trees of binary
contractions, which
are used by the token-passing machine to represent shared assignments.
In our graph-rewriting machine, trees of binary contractions are
replaced with single generalised contraction nodes of arbitrary arity,
to achieve time efficiency.
Therefore, the counterpart of the paths over binary contractions is
simply connections over single generalised contraction nodes.

To wrap up, we introduced the DGoIM, which can interleave
token-passing GoI with
graph rewriting, using the token-passing as a guide.
As a case study, we showed how the DGoIM with the rewrites-first
interleaving can time-efficiently implement three evaluations:
call-by-need, left-to-right call-by-value and right-to-left
call-by-value.
These evaluations have different control over caching
intermediate results.
The difference boils down to different routing of the token in the
DGoIM, which is achieved by simply switching graph representations
(namely, nodes modelling function application) of terms.

The idea of using the token as a guide of graph rewriting was also
proposed by Sinot~\cite{Sinot05,Sinot06} for interaction nets.
He shows how using a token can make the rewriting system implement the
call-by-name, call-by-need and call-by-value evaluation strategies.
Our development in this work can be seen as a realisation of the
rewriting system as an abstract machine, in particular with explicit
control over copying sub-graphs.

The token-guided graph rewriting is a flexible framework
with which we can implement various evaluation strategies of the
lambda-calculus and analyse execution cost.
Our focus in this work was primarily on time efficiency. This is to
complement existing work on operational semantics given by
token-passing GoI, which
usually achieves space efficiency, and also to confirm that
introduction of graph rewriting to the semantics does not bring in any
hidden inefficiencies.
We believe that further refinements, not only of the interleaving
strategies of token routing and graph reduction, but also of the
graph representation, can be formulated to serve particular
objectives in the space-time execution efficiency trade-off, such as
\emph{full lazy evaluation}, as hinted by Sinot~\cite{Sinot05}.

As a final remark, the flexibility of our framework also allows us to
handle the operational semantics of exotic language features,
especially data-flow features.
One such feature is to turn a parametrised data-flow network into an
ordinary function that takes parameters as an argument and returns
the network, which we model using the token-guided graph
rewriting~\cite{CheungDGMR18}.
This feature can assist a common programming idiom of machine learning
tasks, in which a data-flow network is constructed as a program, and
then modified at run-time by updating values of parameters embedded
into the network.

\section*{Acknowledgement}
\noindent
We are grateful to Ugo Dal Lago and anonymous reviewers for
encouraging and insightful comments on earlier versions of this
work.
We thank Steven W.\ T.\ Cheung for helping us implement the on-line visualiser.
The second author is grateful to Michele Pagani for stimulating
discussions in the very early stages of this work.



\bibliographystyle{alpha}
\bibliography{ref}
\end{document}